\newcommand{\cclass}[1]{\ensuremath{\mbox{\textup{#1}}}\xspace}
\newcommand{\NL}{\cclass{NL}}
\newcommand{\NP}{\cclass{NP}}
\newcommand{\NLk}{\cclass{EPNL}}
\newcommand{\SNP}{\cclass{SNP}}
\newcommand{\ceil}[1]{\lceil #1 \rceil}
\newcommand{\nlkreduce}{\leq_L^{pp}}
\newcommand{\tw}{{\bf tw}\xspace}
\newcommand{\cw}{{\bf cw}\xspace}
\newcommand{\pw}{{\bf pw}\xspace}
\newcommand{\C}{{\mathcal C}\xspace}
\newcommand{\poly}{\mathrm{poly}\xspace}
\newcommand{\true}{\mathrm{true}\xspace}
\newcommand{\problem}[1]{\textsc{#1}\xspace}
\newcommand{\SAT}{\problem{SAT}}
\newcommand{\ThreeSAT}{\problem{3-SAT}}
\newcommand{\MaxTwoSAT}{\problem{Max 2-SAT}}
\newcommand{\IS}{\problem{Independent Set}}
\newcommand{\DS}{\problem{Dominating Set}}
\newcommand{\HIT}{\problem{Hitting Set}}
\newcommand{\SPLIT}{\problem{Set Splitting}}
\newcommand{\SetCover}{\problem{Set Cover}}
\newcommand{\TSP}{\problem{Directed Hamiltonicity}}
\newtheorem{theorem}{Theorem}
\newtheorem{lemma}{Lemma}
\newtheorem{definition}{Definition}
\newtheorem{proposition}{Proposition}
\begin{document}


\title{On the Equivalence among Problems of Bounded Width}

\author{Yoichi Iwata
	\thanks{The University of Tokyo
		\texttt{y.iwata@is.s.u-tokyo.ac.jp}
	}\and
  	Yuichi Yoshida
  	\thanks{National Institute of Informatics and Preferred Infrastructure, Inc.
  	\texttt{yyoshida@nii.ac.jp} }
}

\maketitle

\begin{abstract}
  In this paper, we introduce a methodology, called decomposition-based reductions, for showing the equivalence
  among various problems of bounded-width.

  First, we show that the following are equivalent for any $\alpha > 0$:
  \begin{itemize}
  \itemsep=0pt
  \item \SAT can be solved in $O^*(2^{\alpha\tw})$ time,
  \item \ThreeSAT can be solved in $O^*(2^{\alpha\tw})$ time,
  \item \MaxTwoSAT can be solved in $O^*(2^{\alpha\tw})$ time,
  \item \IS can be solved in $O^*(2^{\alpha\tw})$ time, and
  \item \IS can be solved in $O^*(2^{\alpha\cw})$ time,
  \end{itemize}
  where $\tw$ and $\cw$ are the tree-width and clique-width of the instance, respectively.

  Then, we introduce a new parameterized complexity class $\NLk$, which includes \SetCover and \TSP, and show that \SAT,
  \ThreeSAT, \MaxTwoSAT, and \IS parameterized by path-width are $\NLk$-complete.
  This implies that if one of these $\NLk$-complete problems can be solved in $O^*(c^k)$ time, then any problem in
  $\NLk$ can be solved in $O^*(c^k)$ time.

\end{abstract}


\section{Introduction}\label{sec:intro}
\SAT is a fundamental problem in complexity theory.
Today, it is widely believed that \SAT cannot be solved in polynomial time.
This is not only because anyone could not find a polynomial-time algorithm for \SAT despite many attempts, but also
because if \SAT can be solved in polynomial time, any problem in \NP can be solved in polynomial time
(\NP-completeness).
Actually, even no algorithms faster than the trivial
$O^*(2^n)$-time\footnote{$O^*(\cdot)$ hides a factor polynomial in the input size.} exhaustive search algorithm are
known.
Impagliazzo and Paturi~\cite{DBLP:journals/jcss/ImpagliazzoP01} conjectured that \SAT cannot be solved in
$O^*((2-\epsilon)^n)$ time for any $\epsilon > 0$, and this conjecture is called the \emph{Strong Exponential Time Hypothesis (SETH)}.
Under the SETH, conditional lower bounds for several problems have been obtained, including
\problem{$k$-Dominating Set}~\cite{DBLP:conf/soda/PatrascuW10}, problems of bounded
tree-width~\cite{DBLP:conf/soda/LokshtanovMS11a,DBLP:conf/focs/CyganNPPRW11}, and \problem{Edit
Distance}~\cite{DBLP:conf/stoc/BackursI15}.

When considering polynomial-time tractability, all the \NP-complete problems are equivalent,
that is, if one of them can be solved in polynomial time, then all
of them can be also solved in polynomial time.
Similarly, when considering subexponential-time tractability, all the \SNP-complete problems are
equivalent~\cite{DBLP:journals/jcss/ImpagliazzoPZ01}.
However, if we look at the exponential time complexity for solving each \NP-complete problem more closely, the situation
changes; whereas the current fastest algorithm for \SAT is the naive
$O^*(2^n)$-time exhaustive search algorithm,
faster algorithms have been proposed for many other NP-complete problems such as
\ThreeSAT~\cite{DBLP:journals/siamcomp/Hertli14}, \MaxTwoSAT~\cite{DBLP:journals/tcs/Williams05}, and
\IS~\cite{DBLP:conf/isaac/XiaoN13}.
Although there are many problems, including \SetCover and \TSP\footnote{For \problem{Undirected Hamiltonicity}, a
faster algorithm has been proposed in a recent paper by Bj\"orklund~\cite{DBLP:journals/siamcomp/Bjorklund14}. However, for \TSP, the trivial
$O^*(2^n)$-time dynamic programming algorithm is still the current fastest.}, for which the current fastest algorithms
take $O^*(2^n)$ time, we do not know whether a faster algorithm for one of these problems leads to a faster algorithms for \SAT and vice versa.
Actually, only a few problems, such as \HIT and \SPLIT, are known to be equivalent to \SAT
in terms of exponential time complexity~\cite{DBLP:conf/coco/CyganDLMNOPSW12}.

In this paper, we propose a new methodology, called \emph{decomposition-based reductions}.
Although the idea of decomposition-based reductions is simple, we can obtain various interesting results.
First, we show that when parameterized by \emph{width}, there are many problems that are equivalent to \SAT.
Second, we show the equivalence among different width; \IS parameterized by tree-width and \IS parameterized
by clique-width are equivalent.
Third, we introduce a new parameterized complexity class \emph{$\NLk$}, which includes \SetCover and \TSP, and
show that many problems parameterized by path-width are $\NLk$-complete.
For these problems, conditional lower-bounds under the SETH are already known~\cite{DBLP:conf/soda/LokshtanovMS11a}.
However, our results imply that these problems are at least as hard
as not only $n$-variable \SAT but also \emph{any} problem in $\NLk$.
In this sense, our hardness results are more robust.

It has been shown that many NP-hard graph optimization problems can be solved
efficiently if the input graph has a nice decomposition.
One of the most famous decompositions is \emph{tree-decomposition},
and a graph is parameterized by \emph{tree-width},
the size of the largest bag in the (best) tree-decomposition of the graph.
Intuitively speaking, tree-width measures how much a graph looks like a tree.
If we are given a graph and its tree-decomposition of width \tw\footnote{Obtaining a tree-decomposition of the minimum
width is \NP-hard. In this paper, we assume that we are given a decomposition as a part of the input, and a problem
is parameterized by the width of the given decomposition.}, many problems can be solved in $O^*(c^\tw)$ time, where $c$
is a problem-dependent constant.
For example, we can solve \IS and \MaxTwoSAT in $O^*(2^\tw)$ time by standard dynamic programming and \DS in
$O^*(3^\tw)$ time by combining with subset convolution~\cite{DBLP:conf/esa/RooijBR09}.\footnote{For problems
related to SAT, we consider the tree-width of the primal graph of the input. See Section~\ref{sec:pre} for details.}
Recently, Lokshtanov~\emph{et~al.}~\cite{DBLP:conf/soda/LokshtanovMS11a} showed that many of these algorithms are
optimal under the SETH.
These results are obtained by reducing an $n$-variable instance of \SAT to an instance of the target problem with tree-width approximately $\frac{n}{\log c}$, where $c$ is a problem dependent constant.
However, these reductions are one-way, and thus a faster \SAT algorithm may not lead to faster algorithms for
these problems.
Moreover, there is a possibility that one of these problems has a faster algorithm but the others do not.

The first contribution of this paper is showing the following equivalence among  problems of bounded tree-width:
\begin{theorem}\label{thm:tw:eq}
  For any $\alpha>0$, the following are equivalent:
  \begin{enumerate}
    \item \SAT can be solved in $O^*(2^{\alpha\tw})$ time.
    \item \ThreeSAT can be solved in $O^*(2^{\alpha\tw})$ time.
    \item \MaxTwoSAT can be solved in $O^*(2^{\alpha\tw})$ time.
    \item \IS can be solved in $O^*(2^{\alpha\tw})$ time.
  \end{enumerate}
\end{theorem}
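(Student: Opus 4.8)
The plan is to prove the four statements equivalent by exhibiting a cycle of implications $(1)\Rightarrow(2)\Rightarrow(3)\Rightarrow(4)\Rightarrow(1)$, where each implication comes from a reduction between the corresponding problems that preserves tree-width up to a small additive term. The governing principle of these \emph{decomposition-based reductions} is the following: if an instance of a problem $A$ together with a tree-decomposition of width $t$ can be transformed into an instance of a problem $B$ of tree-width at most $t+O(\log n)$ (with a witnessing decomposition computed along the way), then an $O^*(2^{\alpha\tw})$-time algorithm for $B$ yields an $O^*(2^{\alpha(t+O(\log n))})=O^*(2^{\alpha t})$-time algorithm for $A$, since the extra $2^{\alpha\cdot O(\log n)}=n^{O(\alpha)}$ factor is absorbed into $O^*$. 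Hence the entire difficulty is to keep the produced width within an \emph{additive} $O(\log n)$ of the original; in particular we must never incur a \emph{multiplicative} blow-up, as that would replace $\alpha$ by $c\alpha$ in the exponent.

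Two of the four arrows come from purely local gadgets that add only $O(1)$ to each bag. Implication $(1)\Rightarrow(2)$ is immediate, since \ThreeSAT is a special case of \SAT. For $(3)\Rightarrow(4)$ I would reduce \IS to \MaxTwoSAT by introducing a variable $x_v$ per vertex, a unit clause $(x_v)$ rewarding selection, and, for each edge $uv$, sufficiently many copies of $(\neg x_u \vee \neg x_v)$ to forbid choosing both endpoints in any optimal solution. The primal graph of this formula is exactly the input graph (the copies only create multi-edges, which do not affect tree-width), so the width is preserved \emph{exactly}, and the maximum number of satisfied clauses determines the maximum independent set.

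The remaining two arrows are the substantive ones, and both use the supplied decomposition as scaffolding. For $(2)\Rightarrow(3)$ I would reduce \MaxTwoSAT to \ThreeSAT; the obstacle is that an optimization problem must be simulated by a decision problem, so I must threshold ``at least $k$ clauses are satisfied,'' a global constraint. I would attach an indicator variable $s_j$ to each clause and realize $\sum_j s_j \ge k$ by a counter laid out along a binarized version of the tree-decomposition: at each bag we carry $O(\log n)$ bits recording the partial count coming from the subtree below and add the local contributions. This inflates each bag by only $O(\log n)$ variables, so the width grows additively; clause splitting into \ThreeSAT costs a further $O(1)$, and binary search over $k$ recovers the optimum. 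The same counter trick handles any cardinality constraint arising when an optimization problem is encoded as a decision problem.

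The crux is $(4)\Rightarrow(1)$: a reduction from \SAT (equivalently, after clause splitting, from \ThreeSAT) to \IS that preserves tree-width additively. The textbook construction---a triangle per clause plus an edge between every pair of complementary literal occurrences---is useless here, because the complete bipartite ``consistency'' connections between all occurrences of a variable and its negation destroy the tree structure. The difficulty is sharpened by the fact that \IS can natively express only ``at most one of two chosen vertices,'' so encoding a clause (``at least one literal is true'') inevitably requires auxiliary gadgetry; the naive remedy---a true/false vertex pair per variable per bag, chained by equality gadgets between adjacent bags---doubles the bag size and thus turns $\alpha$ into $2\alpha$. I therefore expect the main effort to lie in designing a decomposition-aware encoding whose per-bag overhead is only additive: the assignment to the variables of a bag should be represented by the selection pattern of essentially \emph{one} vertex per variable, with clause satisfaction and inter-bag consistency enforced by $O(1)$ (or $O(\log n)$) further vertices that live in adjacent bags of the decomposition. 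Driving this overhead down from multiplicative to additive, so that the resulting independent-set instance has tree-width $t+o(t)$, is the heart of the argument and the step I would treat most carefully.
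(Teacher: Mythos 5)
Your overall architecture is the same as the paper's: a cycle of tree-width-preserving reductions, each adding only $O(\log n)$ to the width, with the resulting $n^{O(\alpha)}$ factor absorbed into $O^*$. Three of your four arrows are essentially the paper's: the trivial inclusion of \ThreeSAT in \SAT, the reduction from \IS to \MaxTwoSAT whose primal graph equals the input graph, and the counter-along-the-decomposition idea for thresholding the number of satisfied clauses (the paper routes this through \SAT and carries a partial count $(s_{i,*})_2$ of $O(\log m)$ bits per node of the nice tree-decomposition, adding an indicator exactly at the unique Introduce($C$) node of each clause and summing at Join nodes; your sketch is the same idea, and the binary search you mention is unnecessary since \MaxTwoSAT is already a decision problem with a threshold $k$).

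The genuine gap is the arrow you yourself flag as the crux: you correctly diagnose that the naive reduction from \ThreeSAT to \IS doubles the width (two vertices $x$ and $\overline{x}$ per variable per bag, giving $2\tw+O(1)$), and you correctly state what the target should be, but you do not give the construction, and without it the cycle is not closed. The paper's resolution is asymmetric and is worth internalizing: for each node $i$ with parent $p$ and each shared variable $x$, a single edge between $\overline{x_i}$ and $x_p$ enforces only the implication $x_p\in S\Rightarrow x_i\in S$; the reverse implications are \emph{not} enforced per variable but collectively, by a binary counting gadget of $O(\log\tw)$ width that tallies $|(\{x_i\mid x\in P_i\}\cup\{\overline{x_p}\mid x\in P_i\})\cap S|$ using layered variable gadgets simulating addition. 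Since this count is automatically at least $|P_i|$, a clause-gadget constraint forcing it to be at most $|P_i|$ pins it to equality, which holds if and only if $x_i\in S\Leftrightarrow x_p\in S$ for every $x\in P_i$. This is what turns the multiplicative factor of $2$ into an additive $O(\log\tw)$; verifying the resulting width bound then requires a careful elimination-ordering argument (the paper's Lemmas~\ref{lem:tw:eliminate}--\ref{lem:tw:eliminate_layer}). As it stands, your proposal identifies the right obstacle but leaves the theorem unproved at exactly the step where the new idea is needed.
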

For all of these problems, the fastest known algorithms run in $O^*(2^\tw)$
time~\cite{Niedermeier2006} and Theorem~\ref{thm:tw:eq} states that this is not a coincidence.
Note that an $n$-variable instance of \SAT has tree-width at most $n-1$.
Hence by Theorem~\ref{thm:tw:eq}, for any $\epsilon > 0$,
an $O^*((2-\epsilon)^{\tw})$-time algorithm for \IS of bounded tree-width implies an $O^*((2-\epsilon)^{n})$-time
algorithm for the general \SAT.
Therefore, our result includes the hardness result by Lokshtanov~\emph{et~al.}~\cite{DBLP:conf/soda/LokshtanovMS11a}.
We believe that the same technique can be applied to many other problems.
In practice, SAT solvers are widely used to solve various problems by reductions to \SAT.
Using our methodology, we can reduce an instance of some problem to an instance of \SAT by preserving the tree-width.
Since tree-decompositions can be used to speed-up SAT solvers~\cite{DBLP:conf/ictai/HabetPT09}, our reductions may be
useful in practice.

\emph{Clique-width} is the number of labels we need to construct the given graph by iteratively performing
certain operations.
Similarly to the tree-width case, many problems can be solved in $O^*(c^\cw)$ time if the given graph has a clique-width $\cw$, where $c$ is a problem-dependent constant~\cite{Courcelle:2000jv}.

The second contribution of this paper is showing the following equivalence between \IS of bounded tree-width and bounded
clique-width:
\begin{theorem}\label{thm:cw:eq}
  For any $\alpha>0$, the following are equivalent:
  \begin{enumerate}
    \item \IS can be solved in $O^*(2^{\alpha\tw})$ time.
    \item \IS can be solved in $O^*(2^{\alpha\cw})$ time.
  \end{enumerate}
\end{theorem}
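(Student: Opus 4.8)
The plan is to establish both implications by \emph{decomposition-based reductions}, each transforming an instance together with its decomposition into an equivalent \IS instance whose natural decomposition has width larger by only an additive constant. This additive slack is essential: since the statement is asserted for each $\alpha$ separately, a reduction is only useful if it preserves the width up to $+O(1)$, as any multiplicative blow-up would change $\alpha$ into a larger constant.

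For the direction from tree-width to clique-width, that (1) implies (2), I would start from a graph $G$ given by a $k$-expression and build an \IS instance $G'$ of tree-width $k+O(1)$ whose maximum independent set encodes that of $G$. The guide is the monotone dynamic program for \IS on clique-width, where the state at an expression node is the set $S\subseteq[k]$ of labels allowed to occur in the chosen independent set and the value is the best achievable size using only those labels. I would realize $S$ by $k$ interface vertices and attach at each node a constant-size gadget: an \emph{introduce} node contributes one optional vertex gated by its label, a \emph{join} $\eta_{i,j}$ adds the single edge forbidding labels $i$ and $j$ together, and a \emph{relabel} $\rho_{i\to j}$ rewires only the two affected interface vertices, leaving the other $k-2$ untouched. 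The crucial point is the \emph{union} node: in the monotone program the value is simply $g_{t_1}(S)+g_{t_2}(S)$ with the \emph{same} $S$, so both child gadgets may hang off a single shared copy of the interface. Because tree-decompositions natively allow one bag to branch into several subtrees, this sharing is free: the interface sits in one bag that propagates into both child subtrees, so no node ever holds two copies of $S$. Hence every bag has size $k+O(1)$, giving $\tw(G')\le \cw(G)+O(1)$, and running the assumed $O^*(2^{\alpha\tw})$ algorithm on $G'$ solves \IS on $G$ in $O^*(2^{\alpha\cw})$ time.

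For the converse, that (2) implies (1), I would go the other way: from a width-$\tw$ tree-decomposition of $G$, construct an \IS instance $G''$ admitting a $(\tw+O(1))$-expression. I would mirror the tree-decomposition, devoting one label to each of the $\le \tw+1$ bag positions (present exactly when that bag vertex is taken), creating intra-bag edges by joins, and retiring forgotten vertices to a single dead label. Introduce, forget, and join nodes are again local and cost only $O(1)$ extra labels.

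The main obstacle is the \emph{branch} nodes, and it is genuinely harder here than in the first direction because the two formalisms branch differently: a tree-decomposition lets two subtrees \emph{share} the separator bag, whereas the only way a $k$-expression combines two subexpressions is by \emph{disjoint} union, which duplicates the shared bag. Simply taking the union and invoking the monotone program is incorrect, as it lets the two children select different subsets of the separator even though the underlying bag vertices are global, and this strictly over-counts. To repair this I would build the two children with their separator copies and then force the two copies of each separator position into agreement by a constant-size synchronization gadget, sweeping the $\le \tw+1$ positions one at a time so that only $O(1)$ scratch labels are ever live beyond the bag itself. Getting this synchronization to run with additive rather than multiplicative label overhead is the delicate step; done correctly it yields $\cw(G'')\le \tw(G)+O(1)$, and the assumed $O^*(2^{\alpha\cw})$ algorithm then solves \IS on $G$ in $O^*(2^{\alpha\tw})$ time. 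Throughout, the fact that \IS counts unweighted vertices is handled in the standard way, by arranging each gadget's forced choices to contribute a fixed amount so that $\mathrm{mis}(G'')$ equals $\mathrm{mis}(G)$ up to a known additive constant.
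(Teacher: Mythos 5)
Your route is genuinely different from the paper's, and unfortunately the difference lands you on the hard side of the central difficulty. The paper proves both implications by passing through \SAT/\ThreeSAT parameterized by tree-width and invoking Theorem~\ref{thm:tw:eq}: Section~\ref{sec:is-cw-to-sat-tw} reduces \IS of clique-width $\cw$ to \SAT of tree-width $\cw+O(\log n)$ (where parent--child consistency of the DP state is expressed by \emph{free} equality clauses and a one-directional implication made harmless by a monotonicity argument), and Section~\ref{sec:tw-3sat-to-cw-is} shows that the graph already built in the \ThreeSAT-to-\IS reduction of Section~\ref{sec:tw:3sat_is} has clique-width $\tw+O(\log\tw)$. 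You instead attempt direct \IS-to-\IS reductions in both directions. (A side remark: your premise that only $+O(1)$ overhead is usable is too strict; additive $O(\log n)$ is equally harmless because $O^*(2^{\alpha(w+O(\log n))})=O^*(2^{\alpha w})$, and that is exactly the slack the paper exploits.)

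The genuine gap is that in both directions you never construct the one gadget on which the theorem turns: a way to force two copies of a state bit to agree \emph{inside an \IS instance} without doubling the width. A single edge $\overline{x_c}x_p$ enforces only $x_p\in S\Rightarrow x_c\in S$; enforcing the converse by local edges requires both $x$ and $\overline{x}$ of every one of the $\tw+1$ (or $\cw$) positions to be alive simultaneously, which is precisely the $2\tw+O(1)$ blow-up the paper warns about at the start of Section~\ref{sec:tw:3sat_is}. In your $(1)\Rightarrow(2)$ direction this problem is hidden by the claim that the interface is ``shared'': but the state changes at $\rho$ and $\eta$ nodes, so adjacent expression nodes need distinct, synchronized copies (or else persistent interface vertices whose complements must co-occur with them in bags near every operation, again doubling the bag size). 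In your $(2)\Rightarrow(1)$ direction you correctly identify the branch-node mismatch but then posit a ``constant-size synchronization gadget'' without exhibiting it. The paper's resolution is the counting gadget --- a binary counter certifying that the number of mismatched parent--child pairs is zero --- which costs $O(\log\tw)$ extra width and constitutes the main technical content of Sections~\ref{sec:tw:3sat_is} and~\ref{sec:tw-3sat-to-cw-is}; in the other direction the paper sidesteps the issue entirely by targeting \SAT and then appealing to Theorem~\ref{thm:tw:eq}. Until you supply a synchronization mechanism meeting your width bounds (or reroute through \SAT as the paper does), neither implication is established.
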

The fastest known algorithms for \IS parameterized by clique-width runs in $O^*(2^\cw)$
time~\cite{Courcelle:2000jv}.
It is surprising that we can obtain such strong connections between problems of bounded tree-width and a
problem of bounded clique-width because tree-width and clique-width are very different parameters in nature; a complete
graph of $n$ vertices has a clique-width two whereas its tree-width is $n-1$.
Hence, even if there is an efficient algorithm for a problem of bounded tree-width, it does not immediately imply that there is an efficient algorithm for the same problem of bounded clique-width.
However, Theorem~\ref{thm:cw:eq} states that a faster algorithm for \IS of bounded tree-width implies a faster algorithm
for \IS of bounded clique-width.
We note that \IS is chosen because \SAT, \ThreeSAT, and \MaxTwoSAT are still $\NP$-complete when its primal graph is a clique
($\cw=2$).
Hence, these problems parameterized by tree-width and clique-width are not equivalent unless $\cclass{P}=\NP$.
We believe that we can obtain similar results for many other problems that can be solved efficiently on graphs of
bounded clique-width.

The third contribution of this paper is introducing a new parameterized complexity class $\NLk$
(Exactly Parameterized $\NL$) and showing the following complete problems:
\begin{theorem}\label{thm:pw:complete}
\SAT, \ThreeSAT, \MaxTwoSAT, and \IS parameterized by path-width are $\NLk$-complete.
\end{theorem}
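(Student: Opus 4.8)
The plan is to prove the two containments that completeness requires: membership of each of the four problems in $\NLk$, and $\NLk$-hardness of each under the parameter-preserving logspace reduction $\nlkreduce$. For membership, I would realize the standard path-decomposition dynamic program as a nondeterministic bounded-space computation whose \emph{width} is governed by the path-width. Given a path-decomposition of width $k$, a nondeterministic machine sweeps the bags from left to right, keeping in memory only a description of the partial solution restricted to the current bag: for \IS this is the selected subset of the at most $k+1$ vertices in the bag; for \SAT and \ThreeSAT it is the partial truth assignment to the variables in the bag; and for \MaxTwoSAT it is the assignment together with a running count of satisfied clauses. At each step the machine nondeterministically extends the partial solution to the next bag, verifies the local constraints (no edge inside the selected set; each clause is checked when its last variable leaves the window), and carries only $O(k)$ parameter-dependent bits plus $O(\log n)$ bookkeeping. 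Crucially, the satisfied-clause counter lives in the $O(\log n)$ part and therefore does not affect the base of the exponential, so all four problems fall into $\NLk$.

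For hardness, I would take an arbitrary problem $A \in \NLk$, decided by a nondeterministic machine whose configuration width is the parameter $k$, and encode its computation on an input $x$ as a bounded-path-width instance. Such a computation unfolds into a layered configuration structure: a sequence of time layers, each carrying an $O(k)$-bit configuration, with transitions only between consecutive layers, and the question is whether an accepting layer is reachable. I would introduce $O(k)$ Boolean variables per layer to encode its configuration, add clauses enforcing a valid transition between each pair of consecutive layers, and pin the initial and final configurations. Because every clause touches variables of at most two consecutive layers, the bags $(\text{layer } t) \cup (\text{layer } t+1)$, slid along $t$, form a path-decomposition of width $O(k)$; hence the resulting \SAT instance has path-width $O(k)$, matching the parameter up to the constant factor that $\nlkreduce$ tolerates. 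Since the transition relation of an $\NLk$ machine is local and simple, each transition gadget is a constant-size (per configuration bit) block of clauses that a logspace transducer can emit while merely iterating over $t$, so the reduction is computable in logspace and preserves the parameter, giving $\NLk$-hardness of \SAT parameterized by path-width.

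It then remains to transfer completeness to the other three problems, which I would do with path-width-preserving logspace reductions among \SAT, \ThreeSAT, \MaxTwoSAT, and \IS --- the path-width analogues of the decomposition-based reductions behind Theorem~\ref{thm:tw:eq}. Splitting a long clause into a chain of $3$-clauses with fresh variables, and the usual variable/clause gadget reduction from \SAT to \IS, both keep the interaction local along the decomposition and so inflate the width by only an additive or constant-factor amount; as $\nlkreduce$ absorbs such slack, each of the four problems is simultaneously shown to be in $\NLk$ and $\NLk$-hard, hence $\NLk$-complete.

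The main obstacle is the hardness reduction: one must realize an \emph{arbitrary} bounded-width nondeterministic computation as a bounded-path-width formula using only logspace, which forces the transition-checking gadgets to be confined to two consecutive layers and to be describable locally from the machine's transition function. Controlling the blow-up from $k$ to $O(k)$ tightly enough that the reduction preserves the parameter in the strong sense demanded by $\nlkreduce$ --- so that an $O^*(c^k)$-time algorithm for one problem really does propagate to every problem in $\NLk$ --- is the delicate point, and it is precisely what ties this theorem to the exact-base equivalences established in Theorems~\ref{thm:tw:eq} and~\ref{thm:cw:eq}.
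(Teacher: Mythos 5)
Your high-level plan coincides with the paper's: a sweep-the-bags verifier for membership, a Cook--Levin-style encoding of an $\NLk$ computation into a layered CNF for hardness of \SAT, and then width-preserving reductions to carry completeness to the other three problems. But there is a genuine gap, and it sits exactly where you yourself flag ``the delicate point'' without resolving it: you treat the parameter preservation as holding ``up to the constant factor that $\nlkreduce$ tolerates.'' It tolerates no such thing. The reduction $\nlkreduce$ requires $\kappa'(\phi(x))\leq\kappa(x)+O(\log|x|)$, an \emph{additive} logarithmic slack only, and likewise membership in $\NLk$ requires at most $\kappa(x)$ bits on the $k$-bit tape, not $O(\kappa(x))$. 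This strictness is the entire content of the theorem: it is what lets an $O^*(c^k)$ algorithm for one complete problem propagate to every problem in the class with the \emph{same} base $c$ (since $c^{k+O(\log n)}=c^k\cdot\poly(n)$, whereas $c^{2k}$ would be useless). Concretely, your hardness construction takes bags of the form $(\text{layer }t)\cup(\text{layer }t+1)$; since each layer must carry the full $k$-bit tape content, this yields path-width $2k+O(\log n)$, which does not satisfy the definition of $\nlkreduce$ and would only prove hardness in a degenerate, base-losing sense.

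The missing idea is the one the paper uses to collapse $2k$ back to $k$: the transition constraints on the $k$-bit tape are local \emph{per cell} --- the variable $T^K_{i,h}$ interacts only with $T^K_{i+1,h}$ (same cell index $h$) and with the $O(\log|x|)$ ``control'' variables (state, head positions, logspace tape) of steps $i$ and $i+1$, via clauses of the form $T^K_{i,h}\neq T^K_{i+1,h}\rightarrow (H^K_{i,*})_2=h$. One can therefore slide the bag one tape cell at a time (introduce $T^K_{i+1,j}$, forget $T^K_{i,j}$, for $j=1,\dots,k$) rather than holding two whole layers at once, so every bag contains at most $k$ tape variables plus $O(\log|x|)$ control variables. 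This is also why the machine model separates a random-access $k$-bit tape from the logspace tape in the first place. The same additive discipline is needed in your final chain of reductions: splitting long clauses and the \ThreeSAT-to-\IS gadgetry must be shown to increase path-width by $O(1)$ or $O(\log\pw)$ additively (the paper achieves the latter with binary counting gadgets), not ``by a constant factor.'' Your membership argument is essentially correct modulo the same quantitative care (a bag's partial assignment fits in about $k$ bits on the dedicated tape, with the satisfied-clause counter on the logspace tape), but as written the proof does not establish the exact parameter preservation that $\NLk$-completeness demands.
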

Intuitively, $\NLk$ is a class of parameterized problems that can be solved by a non-deterministic Turing machine with
the space of $k+O(\log n)$ bits.
For the precise definitions of $\NLk$ and $\NLk$-completeness, see Section~\ref{sec:pw}.
Flum and Grohe~\cite{DBLP:journals/iandc/FlumG03} introduced a similar class, called $\cclass{para-NL}$, that can be solved in $f(k)+O(\log n)$
space.
Although they showed that a trivial parameterization of an $\NL$-complete problem is $\cclass{para-NL}$-complete under
the standard parameterized reduction, this does not hold in our case because we use a different reduction to
define the complete problems.
If one of the $\NP$-complete problems can be solved in polynomial time, any problem in $\NP$ can be solved in
polynomial time.
Similarly, if one of the $\NLk$-complete problems can be solved in $O^*(c^k)$ time, any problem in $\NLk$ can be
solved in $O^*(c^k)$ time.
Since the class $\NLk$ contains many famous problems, such as \SetCover parameterized by the number of elements and
\TSP parameterized by the number of vertices, for which no $O^*((2-\epsilon)^n)$-time algorithms are known, our result
implies that we can use the hardness of not only \SAT but also these problems to establish the hardness of the problems parameterized by path-width.

\subsection{Overview of Decomposition-based Reductions}
We explain the basic idea of decomposition-based reductions.
Although we deal with three different decompositions in this paper, the basic idea is the same.
We believe that the same idea can be used to other decompositions such as branch-decomposition.

A decomposition can be seen as a collection of sets forming a tree.
For example, tree-decomposition is a collection of bags forming a tree and clique-decomposition is a collection of
labels forming a tree.
First, for each node $i$ of a decomposition tree, we create gadgets as follows:
(1) for each element $x$ in the corresponding set $X_i$, create a \emph{path-like} gadget $x_i$ that expresses the
\emph{state} of the element (e.g. the value of the variable $x$ for the case of \SAT),
and (2) create several gadgets to solve \emph{subproblem} corresponding to this node (e.g. simulate clauses inside $X_i$
for the case of \SAT).
Then, for each node $c$, its parent $p$, and each common element $x\in X_c\cap X_p$, by connecting the tail of
$x_c$ and the head of $x_p$, we establish \emph{local consistency}.
From the definition of the decomposition, this leads to \emph{global consistency}.
Since the obtained graph has a \emph{locality}, it has a small width.
We may need additional tricks to establish local consistency without increasing the width.

\subsection{Organization}
\begin{figure}[!t]
  \centering
  \includegraphics[scale=0.35]{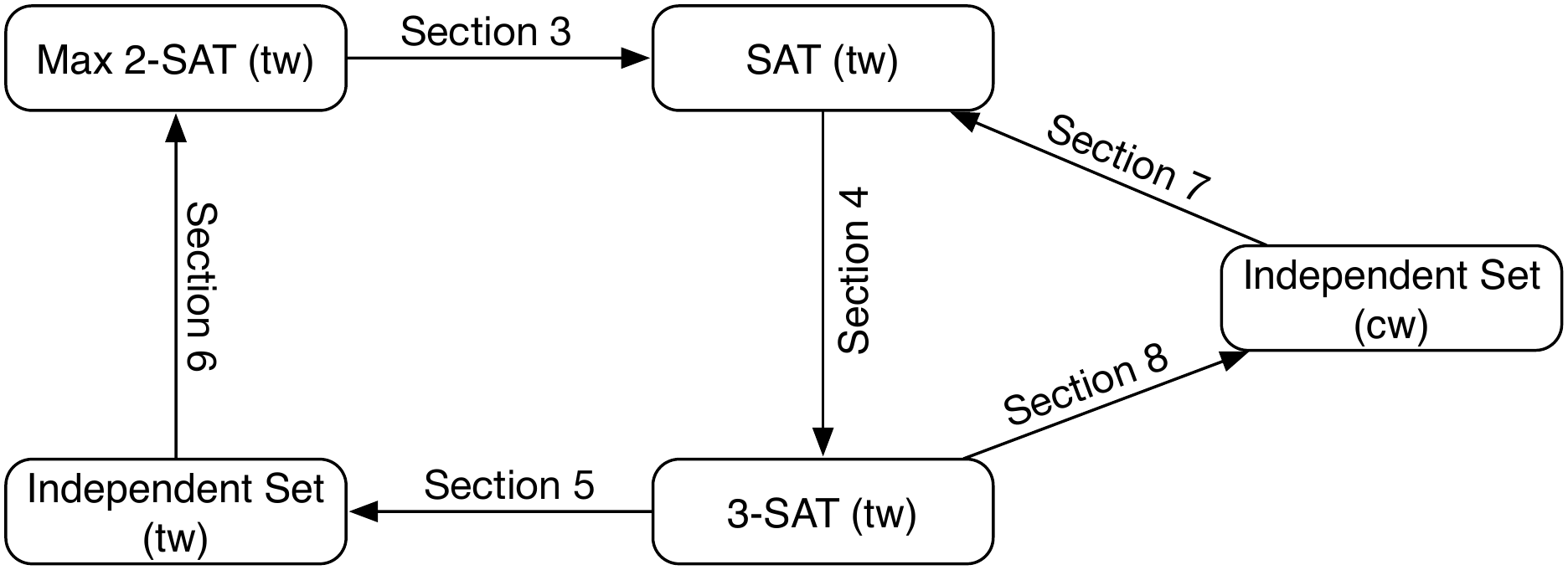}
  \caption{Reductions given in this paper}
  \label{fig:intro:tw}
\end{figure}
The rest of the paper is organized as follows.
In Section~\ref{sec:pre}, we introduce definitions and basic lemmas often used in this paper.
In Section~\ref{sec:tw:max2sat_sat}, we give a tree-width preserving reduction from \MaxTwoSAT to \SAT.
The reduction is rather simple but contains an essential idea of tree-decomposition-based reductions.
The other reductions are given in
Sections~\ref{sec:tw:sat_3sat}~-~\ref{sec:tw-3sat-to-cw-is} (see Figure~\ref{fig:intro:tw}).
In Section~\ref{sec:pw}, we introduce $\NLk$ and show that SAT parameterized by path-width is $\NLk$-complete.


\section{Preliminaries}\label{sec:pre}


For an integer $k$, we denote the set $\{1,2,\ldots,k\}$ by $[k]$ and the set $\{0,1,\ldots,k-1\}$ by $[k]'$.
Let $G=(V,E)$ be an undirected graph.
We denote the \emph{degree} of a vertex $v$ as $d_G(v)$.
We denote the \emph{neighborhood} of a vertex $u$ by $N_G(u)=\{v\in V\mid
\{u,v\}\in E\}$, and the \emph{closed neighborhood} of $u$ by
$N_G[u]=N_G(u)\cup\{u\}$.
Similarly, we denote the neighborhood of a subset $S\subseteq V$ by
$N_G(S)=\bigcup_{v\in S}N_G(v)\setminus S$, and the closed neighborhood by $N_G[S]=N_G(S)\cup S$.
We drop the subscript $G$ when it is clear from the context.
For a subset $S\subseteq V$,
let $G[S]=(S,\{\{u,v\}\in E\mid u\in S, v\in S\})$ denote the \emph{subgraph induced by $S$}.
For a vertex $v\in V$, let $G/v$ denote the graph obtained by removing $v$ and making the neighbors of $v$ form a clique.
We call this operation \emph{eliminating} $v$.
Similarly, for a subset $S\subseteq V$, we denote by $G/S$ the graph obtained by removing $S$ and making the neighbors of $S$ form a clique.


A \emph{tree-decomposition} of a graph $G=(V,E)$ is a pair $(T,\chi)$, where $T=(I,F)$ is a tree and
$\chi=\{X_i\subseteq V\mid i\in I\}$ is a collection of subsets of vertices (called \emph{bags}), with the following
properties:
\begin{enumerate}
  \itemsep=0pt
  \item $\bigcup_i X_i=V$.
  \item For each edge $uv\in E$, there exists a bag that contains both of $u$ and $v$.
  \item For each vertex $v\in V$, the bags containing $v$ form a connected subtree in $T$.
\end{enumerate}
In order to avoid confusion between a graph and its decomposition tree $T$, we call a vertex of the tree a
\emph{node}, and an edge of the tree an \emph{arc}.
We identify a node $i\in I$ of the tree and the corresponding bag $X_i$.
The \emph{width} of a tree-decomposition is the maximum of $|X_i|-1$ over all nodes $i\in I$.
The \emph{tree-width} of a graph $G$, $\tw(G)$, is the minimum width among all the possible tree-decompositions of
$G$.

A \emph{nice tree-decomposition} is a tree decomposition such that the root bag $X_r$ is an empty set and each node $i$ is one of the following types:
\begin{enumerate}
  \itemsep=0pt
  \item Leaf: a leaf node with $X_i=\emptyset$.
  \item Introduce($v$): a node with one child $c$ such that $X_i=X_c\cup \{v\}$ and $v\not\in X_c$.
  \item Introduce($uv$): a node with one child $c$ such that $u,v\in X_i=X_c$. We require that this node
  appears exactly once for each edge $uv$ of $G$.
  \item Forget($v$): a node with one child $c$ such that $X_i=X_c\setminus \{v\}$ and $v\in X_c$. From the
  definition of tree-decompositions, this node appears exactly once for each vertex of $G$.
  \item Join: a node with two children $l$ and $r$ with $X_i=X_l=X_r$.
\end{enumerate}
Any tree-decomposition can be easily converted into a nice tree-decomposition of the same width in polynomial time by
inserting intermediate bags between each adjacent bags.
Thus, in this paper, we use nice tree-decompositions to make discussions simple.

A (nice) \emph{path-decomposition} is a (nice) tree-decomposition $(T,\chi)$ such that the decomposition tree $T=(I,F)$
is a path.
The \emph{path-width} of a graph $G$, $\pw(G)$, is the minimum width among all the possible path-decompositions of
$G$.

In order to prove the upper bound on tree-width, we will often use the following lemmas.
\begin{lemma}[Arnborg~\cite{DBLP:journals/bit/Arnborg85}]\label{lem:tw:eliminate}
For a graph $G=(V,E)$ and a vertex $v\in V$, $\tw(G)\leq\max(d(v),\tw(G/v))$.
Moreover, if we are given a tree-decomposition of $G/v$ of width $w$, we can construct a tree-decomposition of $G$ of
width $\max(d(v),w)$ in linear time.
\end{lemma}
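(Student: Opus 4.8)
The plan is to build the desired tree-decomposition of $G$ directly from a given tree-decomposition $(T,\chi)$ of $G/v$ of width $w$. The key structural fact I would exploit is that eliminating $v$ turns $N_G(v)$ into a clique of $G/v$, and that every clique of a graph must be contained in a single bag of any of its tree-decompositions. So the heart of the argument is to locate a node $i$ of $T$ with $N_G(v)\subseteq X_i$, and then reintroduce $v$ by attaching one new node.

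Concretely, I would first establish (or invoke) the clique-containment property: writing $T_u=\{j\in I\mid u\in X_j\}$ for the subtree of a vertex $u$, the second axiom of tree-decompositions guarantees that the subtrees of any two adjacent vertices intersect, and by the Helly property of subtrees of a tree they all share a common node whose bag therefore contains the whole clique. Applying this to $N_G(v)$ in $G/v$ yields a node $i$ with $N_G(v)\subseteq X_i$. To realize this in linear time I would root $T$, compute for each $u\in N_G(v)$ the topmost node $r_u$ of $T_u$, and set $i=r_{u^\ast}$ where $u^\ast$ maximizes the depth of $r_u$. Since $u^\ast$ is adjacent in $G/v$ to every other $u\in N_G(v)$, the subtrees $T_{u^\ast}$ and $T_u$ intersect, and as $r_u$ is no deeper than $r_{u^\ast}$ a short case analysis on the intersection node shows $r_{u^\ast}\in T_u$; hence $N_G(v)\subseteq X_i$.

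Next I would perform the construction: create a fresh node $i'$ with bag $X_{i'}=N_G[v]=N_G(v)\cup\{v\}$ and make it adjacent to $i$. It then remains to verify the three defining properties. Vertex coverage holds since the old decomposition already covers $V\setminus\{v\}$ while the new bag covers $v$. For edge coverage, every edge of $G$ avoiding $v$ is still an edge of $G/v$ (eliminating $v$ only deletes $v$ and adds edges among its neighbors), hence covered by an old bag, whereas every edge $\{v,u\}$ with $u\in N_G(v)$ is covered by $X_{i'}$. For connectivity, $v$ occurs only in the single node $i'$; and for each $u\in N_G(v)$ the old subtree $T_u$ contains $i$ (because $N_G(v)\subseteq X_i$), so appending the leaf $i'$ adjacent to $i$ keeps the bags containing $u$ connected, while vertices outside $N_G[v]$ are untouched.

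Finally, the width bound is immediate: the new bag has size $d(v)+1$, contributing $d(v)$, while every other bag has size at most $w+1$, so the resulting width is $\max(d(v),w)$, matching the claimed bound $\tw(G)\leq\max(d(v),\tw(G/v))$. I expect the main obstacle to be the clique-containment step together with its linear-time realization; the remaining verifications are routine.
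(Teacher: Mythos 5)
Your proof is correct and follows essentially the same route as the paper: both locate a bag of the decomposition of $G/v$ containing the clique $N_G(v)$ and attach a new leaf node with bag $N_G[v]$, yielding width $\max(d(v),w)$. The only difference is that you spell out the clique-containment property and its linear-time realization, which the paper simply asserts.
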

\begin{proof}
Let $T=(I,F)$ be a tree-decomposition of $G/v$ of width $w$.
Since the neighbors $N(v)$ form a clique in $G/v$, there exists a node $i\in I$ such that the bag $X_i$
contains $N(v)$.
Therefore, by creating a node $j$ with $X_j=N[v]$ and adding an arc $ij$, we can obtain a tree-decomposition of $G$.
The width of this tree-decomposition is $\max(|X_j|-1,w)=\max(d(v),w)$.
\end{proof}
\begin{lemma}\label{lem:tw:eliminate_subset}
For a graph $G=(V,E)$ and a vertex subset $S\subseteq V$, $\tw(G)\leq\max(|N[S]|-1,\tw(G/S))$.
\end{lemma}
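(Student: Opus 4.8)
The plan is to mirror the proof of Lemma~\ref{lem:tw:eliminate} almost verbatim, treating the whole set $S$ the way Arnborg's lemma treats a single vertex $v$. First I would fix a tree-decomposition $(T,\chi)$ of $G/S$ of width $w:=\tw(G/S)$, where $T=(I,F)$. By the very definition of $G/S$, the neighborhood $N(S)$ induces a clique in $G/S$; invoking the standard fact that every clique of a graph lies inside some bag of any of its tree-decompositions, there is a node $i\in I$ with $N(S)\subseteq X_i$.

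Next I would enlarge the decomposition by creating one new node $j$ with bag $X_j:=N[S]=S\cup N(S)$ and adding the arc $ij$. I claim the resulting pair is a tree-decomposition of $G$, which I would verify against the three axioms. Coverage of vertices is immediate: $V\setminus S$ is covered by the original decomposition of $G/S$, and $S\subseteq X_j$. For edge coverage, an edge with both endpoints outside $S$ already appears (and is covered) in $G/S$, whereas an edge with at least one endpoint in $S$ has both endpoints in $N[S]=X_j$. For the connectivity axiom, the vertices of $S$ occur only in the single new bag $X_j$, while every vertex of $N(S)$ that now additionally occurs in $X_j$ already lies in the adjacent bag $X_i$, so its collection of bags remains a connected subtree. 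The width of the new decomposition is $\max(|X_j|-1,w)=\max(|N[S]|-1,\tw(G/S))$, which is the claimed bound.

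I expect the only delicate points to be the two facts carrying the argument: that $N(S)$ forms a clique in $G/S$ (which is exactly how the elimination $G/S$ is defined, hence essentially free) and that this clique is therefore housed in a common bag. The latter is the familiar Helly-type property of subtrees of a tree and is already used implicitly in the proof of Lemma~\ref{lem:tw:eliminate}, so I would simply cite it. An alternative route would be to eliminate the vertices of $S$ one at a time and apply Lemma~\ref{lem:tw:eliminate} inductively; this also works, since each eliminated vertex has degree at most $|N[S]|-1$ and the iterated elimination yields a subgraph of $G/S$, but tracking these two facts is more cumbersome than directly attaching the single bag $X_j$.
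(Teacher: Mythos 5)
Your proof is correct, but it takes a genuinely different route from the paper's. The paper proves Lemma~\ref{lem:tw:eliminate_subset} by eliminating the vertices of $S$ one at a time and applying Lemma~\ref{lem:tw:eliminate} iteratively: it observes that each $v_i$ has degree at most $|N[S]|-1$ in the intermediate graph $G_{i-1}$ (since $N_{G_{i-1}}(v_i)\subseteq N_G[S]\setminus\{v_i\}$), and that the final graph $G_k$ is a subgraph of $G/S$ (one-by-one elimination may create fewer fill-in edges than turning all of $N(S)$ into a clique at once), so a tree-decomposition of $G/S$ also works for $G_k$. This is precisely the ``alternative route'' you sketch at the end, and the two facts you flag as needing tracking are exactly the two the paper verifies. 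Your primary argument instead generalizes the one-bag construction of Lemma~\ref{lem:tw:eliminate} directly: $N(S)$ is a clique of $G/S$ by definition of the elimination, hence lies in some bag, and attaching a new bag $N[S]$ there yields a tree-decomposition of $G$ of width $\max(|N[S]|-1,\tw(G/S))$; your check of the three axioms (vertex coverage, edge coverage split by whether an endpoint lies in $S$, and connectivity of the occurrence subtrees) is complete. Your version is arguably cleaner, as it avoids reasoning about intermediate graphs and the subgraph relation $G_k\subseteq G/S$, and it is explicitly constructive in the same sense as Lemma~\ref{lem:tw:eliminate}. The paper's version buys the convenience of reusing Lemma~\ref{lem:tw:eliminate} as a black box and matches the iterated-elimination style of Lemmas~\ref{lem:tw:eliminate_matching} and~\ref{lem:tw:eliminate_layer}. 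Both arguments are valid.
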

\begin{proof}
Let $S=\{v_1,...,v_k\}$.
We eliminate each vertex of $S$ one by one.
We denote the graph after the $i$-th elimination by $G_i=((G/v_1)/v_2)\ldots/v_i$.
By eliminating $v_i$ from $G_{i-1}$, we obtain $\tw(G_{i-1})\leq\max(d_{G_{i-1}}(v_i),\tw(G_i))$.
Since $N_{G_{i-1}}(v_i)\subseteq N_G[S]\setminus\{v_i\}$, we have $\tw(G)=\tw(G_0)\leq\max(|N[S]|-1, \tw(G_k))$.
Because $G_k$ is a subgraph of $G/S$ and any tree-decomposition of a graph is also a tree-decomposition of its
subgraph, we obtain $\tw(G)\leq\max(|N[S]|-1,\tw(G/S))$.
\end{proof}
\begin{lemma}\label{lem:tw:eliminate_matching}
Let $X$ and $Y$ be disjoint vertex sets of a graph $G$ such that for each vertex $x\in X$, $|N(x)\cap Y|\leq 1$.
Then, $\tw(G)\leq\max(|N[X]\setminus Y|,\tw(G/X))$.
\end{lemma}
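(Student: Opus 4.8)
The plan is to start from a tree-decomposition of $G/X$ and then add the vertices of $X$ back, introducing the at-most-one $Y$-neighbor of each $x\in X$ in a carefully scheduled order so that we never need all of these $Y$-neighbors in a single bag. Write $A=N(X)\setminus Y$ and $B=N(X)\cap Y$, so that $N(X)=A\cup B$ and $|N[X]\setminus Y|=|X|+|A|$. Let $T'$ be a tree-decomposition of $G/X$ of width $w=\tw(G/X)$. Since eliminating $X$ turns $N(X)$ into a clique, $A\cup B$ is a clique in $G/X$, and hence some bag $X_i$ of $T'$ contains all of $N(X)$.

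First I would attach to $X_i$ a path of new bags that transitions from the bag $X\cup A$ down to the bag $N(X)=A\cup B$. Enumerate $X=\{x_1,\ldots,x_k\}$ and process the $x_t$ in this order. Starting from $C_0=X\cup A$, for each $t$ I would: if $x_t$ has a $Y$-neighbor $y_{x_t}$ that is not yet present, first introduce $y_{x_t}$, and then forget $x_t$. Thus the running bag is $C_t=\{x_{t+1},\ldots,x_k\}\cup A\cup\{y_{x_s}:s\le t\}$, and after all $k$ steps the final bag equals $A\cup B=N(X)$, which I connect to $X_i$. Every edge inside $X$ and every edge between $X$ and $A$ is covered already by $C_0$; the edge $x_t y_{x_t}$ is covered at the step processing $x_t$, since $y_{x_t}$ has been introduced by then while $x_t$ is still present; and every edge of $G$ avoiding $X$ already lives in $G/X$ and is covered by $T'$. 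Connectivity of each vertex is routine: the vertices of $X$ occupy a contiguous prefix of the new path, each $a\in A$ sits in the whole new path (which meets its subtree of $T'$ at $X_i$), and each $y\in B$ occupies a contiguous suffix of the new path that again meets $T'$ at $X_i$.

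The crux is bounding the size of the transition bag in which $y_{x_t}$ has just been introduced but $x_t$ has not yet been forgotten, since this is the only place where a $Y$-vertex coexists with a soon-to-be-forgotten $X$-vertex. At that moment the bag is $\{x_t,\ldots,x_k\}\cup A\cup\{y_{x_s}:s\le t-1\}\cup\{y_{x_t}\}$, of size $(k-t+1)+|A|+|\{y_{x_s}:s\le t-1\}|+1$. The matching hypothesis $|N(x)\cap Y|\le 1$ is exactly what keeps this under control: each processed vertex contributes at most one distinct $Y$-neighbor, so $|\{y_{x_s}:s\le t-1\}|\le t-1$, and the size is at most $k+|A|+1$. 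Hence every new bag has size at most $|N[X]\setminus Y|+1$, so the new portion has width at most $|N[X]\setminus Y|$, while the bags inherited from $T'$ have width $w$. Combining yields a tree-decomposition of $G$ of width $\max(|N[X]\setminus Y|,\tw(G/X))$, as required.

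The main obstacle is precisely this exact-fit counting at the introduce-before-forget transition, which momentarily carries one extra vertex. I expect the verification of the three tree-decomposition axioms to be straightforward; the only genuine content is choosing the introduce-then-forget schedule and observing that the single-$Y$-neighbor condition is what makes the budget close. Indeed, if some $x$ were allowed two distinct $Y$-neighbors the count above would overshoot by one, which shows both why the hypothesis is used and why the bound is tight.
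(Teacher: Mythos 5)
Your proof is correct and is essentially the paper's argument in unrolled form: the paper eliminates the vertices of $X$ one by one in an arbitrary order and bounds the degree of $x_i$ at elimination time by $(|X|-i)+i+|N(X)\setminus Y|=|N[X]\setminus Y|$, using the fact that the first $i$ processed vertices contribute at most $i$ neighbors in $Y$ --- exactly your count for the transition bag --- and then invokes Lemma~\ref{lem:tw:eliminate} iteratively. The only difference is presentational: you construct the introduce/forget bags and verify the tree-decomposition axioms explicitly, whereas the paper delegates that bookkeeping to the single-vertex elimination lemma.
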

\begin{proof}
Let $X=\{x_i\mid i\in[k]\}$ and $U=N(X)\setminus Y$.
For an integer $i$, we denote the vertex set $\{x_j\mid j\in [i]\}$ by $X_i$.
We eliminate each vertex of $X$ one by one.
After eliminating vertices $X_{i-1}$, $x_i$ can be adjacent only to vertices in
$(X\setminus X_i)\cup\{Y\cap N(X_i)\}\cup U$.
Since $|Y\cap N(X_i)|\leq i$, we have $d(x_i)\leq |X|-i+i+|U|=|X|+|U|=|N[X]\setminus Y|$.
By iteratively applying Lemma~\ref{lem:tw:eliminate}, we obtain $\tw(G)\leq\max(|N[X]\setminus Y|,\tw(G/X))$.
\end{proof}
\begin{lemma}\label{lem:tw:eliminate_layer}
Let $\{S_i\mid i\in [d]\}$ be a family of disjoint vertex sets of a graph $G$ such that each set has size at most $k$
and there are no edges between $S_i$ and $S_j$ for any $|i-j|>1$.
Then, $\tw(G)\leq\max(2k+|N(S)|-1,\tw(G/S))$, where $S=\bigcup_{i\in [d]}S_i$.
\end{lemma}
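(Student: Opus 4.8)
The plan is to fix an elimination ordering that processes the layers from one end, eliminate the vertices of $S$ one by one according to it, and apply Lemma~\ref{lem:tw:eliminate} repeatedly while keeping the degree of each eliminated vertex under control. Concretely, I would eliminate all of $S_1$ first, then all of $S_2$, and so on up to $S_d$ (the order within a layer is irrelevant). Writing $N=N(S)$ for the external boundary, letting $G_0=G$, and letting $G_t$ be the graph after the $t$-th elimination, iterating Lemma~\ref{lem:tw:eliminate} along this order gives $\tw(G)\le\max(D,\tw(G_{|S|}))$, where $D$ is the maximum, over all steps, of the degree of the vertex being eliminated in the current graph. So it suffices to (i) bound $D$ by $2k+|N(S)|-1$ and (ii) show $\tw(G_{|S|})\le\tw(G/S)$.

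The crux is a locality invariant, proved by induction on $j$: after all of $S_1,\dots,S_j$ have been eliminated, every fill-in edge among the surviving vertices lies inside $S_{j+1}\cup N$ (with $S_{d+1}=\emptyset$). The base case $j=0$ is vacuous. For the step, consider eliminating a vertex $u\in S_j$. Its surviving original neighbours lie in $S_j\cup S_{j+1}\cup N$, since the only other possibility is $S_{j-1}$, which is already gone; and by the inductive hypothesis its fill-in neighbours lie in $S_j\cup N$. Hence eliminating $u$ only creates edges inside $S_j\cup S_{j+1}\cup N$, and once all of $S_j$ is removed the surviving fill-in is confined to $S_{j+1}\cup N$. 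The key point, and exactly where the hypothesis ``no edges between $S_i$ and $S_j$ for $|i-j|>1$'' is used, is that fill-in never leaks forward past the next layer.

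Given the invariant, when I eliminate a vertex $v\in S_i$ the graph has already been processed so that all current fill-in lies in $S_i\cup S_{i+1}\cup N$; thus the current neighbours of $v$ are confined to $(S_i\setminus\{v\})\cup S_{i+1}\cup N$, whose size is at most $(k-1)+k+|N(S)|=2k+|N(S)|-1$, which settles (i). For (ii), after eliminating all of $S$ the invariant with $j=d$ says the only surviving fill-in lies inside $N(S)$, so $G_{|S|}$ is obtained from $G[V\setminus S]$ by adding some edges within $N(S)$; since $G/S$ adds \emph{all} edges within $N(S)$, the graph $G_{|S|}$ is a subgraph of $G/S$, and tree-width is monotone under taking subgraphs (as already used in Lemma~\ref{lem:tw:eliminate_subset}). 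Combining (i) and (ii) yields $\tw(G)\le\max(2k+|N(S)|-1,\tw(G/S))$. The one part needing care is the invariant: I must verify that no fill-in created while clearing the lower layers ever connects a surviving vertex to $S_{i+2}$ or beyond, which is precisely what the bounded-interaction hypothesis guarantees, and which is what makes the degree bound independent of the number of layers $d$.
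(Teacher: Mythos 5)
Your proof is correct and follows essentially the same route as the paper's: both process the layers in order and use the hypothesis to argue that, once $S_1,\dots,S_{i-1}$ are gone, the (closed) neighbourhood of $S_i$ is contained in $S_i\cup S_{i+1}\cup N(S)$, giving the $2k+|N(S)|-1$ bound. The only difference is one of granularity: the paper eliminates each layer wholesale via Lemma~\ref{lem:tw:eliminate_subset} (which already encapsulates the subgraph/monotonicity step you spell out), whereas you eliminate vertex by vertex via Lemma~\ref{lem:tw:eliminate} and make the fill-in invariant explicit.
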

\begin{proof}
Let $U=N(S)$.
We eliminate each vertex set $S_i$ one by one.
After eliminating vertex sets $\{S_j\mid j\in [i-1]\}$, it holds that $N(S_i)\subseteq S_{i+1}\cup U$.
Thus, we have $|N[S_i]|\leq 2k+|U|$.
By iteratively applying Lemma~\ref{lem:tw:eliminate_subset}, we obtain $\tw(G)\leq\max(2k+|N(S)|-1,\tw(G/S))$.
\end{proof}
For a vertex set $S$, if we can obtain $\tw(G)\leq\max(d,\tw(G/S))$ by applying one of these lemmas, we say that the elimination has \emph{degree} $d$.
If we can reduce a graph $G$ into a graph $G'$ by a series of eliminations of degree at most $d$, we can obtain
$\tw(G)\leq\max(d,\tw(G'))$.

Let $x$ be a Boolean variable.
We denote the negation of $x$ by $\overline{x}$.
A \emph{literal} is either a variable or its negation,
and a \emph{clause} is a disjunction of several literals $l_1,\dots,l_k$,
where $k$ is called the \emph{length} of the clause.
We call a clause of length $k$ a \emph{$k$-clause}.
A \emph{CNF} is a conjunction of clauses.
If all the clauses have length at most $k$, it is called a \emph{k-CNF}.
We say that a CNF on a variable set $X$ is \emph{satisfiable} if there is an assignment to $X$ that makes the CNF true.
($k$-)\SAT is a problem in which, given a variable set $X$ and a ($k$-)CNF $\C$, the objective is to determine whether $\C$ is satisfiable or not.
\MaxTwoSAT is a problem in which, given a variable set $X$, a 2-CNF $\C$, and an integer $k$, the objective is to determine whether there exists an assignment that satisfies at least $k$ clauses in $\C$.

Let $\C$ be a CNF on variables $X$.
The \emph{primal graph} of $\C$ is the graph $G=(X,E)$ such that there exists an edge between two vertices if and only if their corresponding variables appear in the same clause.
For readability, we identify a variable or a literal as the corresponding vertex in the primal
graph.
That is, we may use the same symbol $x$ to indicate both a variable in a CNF and the corresponding vertex in the primal graph, and both literals $x$ and $\overline{x}$ correspond to the identical vertex in the primal graph.
For a CNF $\C$, we slightly change the definition of the nice tree-decomposition as follows:
\begin{enumerate}
  \itemsep=0pt
  \item[$3'$.] Introduce($C$): an internal node with one child $c$ such that $X_i=X_c$ and all the variables in $C$ are
  in $X_i$. We require that this node appears exactly once for each clause $C\in\C$.
\end{enumerate}
Note that because the variables in the same clause form a clique in the primal graph, there always exists a bag that
contains all of them.

In our reductions, we will use a binary representation of an integer.
Let $\{a_1,a_2,\ldots,a_M\}$ be Boolean variables.
We denote the integer $\sum_{i\in[M], a_i=\true}2^{i-1}$ by $(a_1 a_2 \ldots a_M)_2$, or $(a_*)_2$ for short.
For readability, we will frequently use (arithmetic) constraints such as $(a_*)_2=(b_*)_2+(c_*)_2$.
Note that any arithmetic constraint on $M$ variables can be trivially simulated by at most $2^M$ $M$-clauses.
Thus, if $M$ is logarithmic in the input size, the number of required clauses is polynomial in the input size.


\section{Tree-width preserving reduction from \MaxTwoSAT to \SAT}\label{sec:tw:max2sat_sat}
Let $(X,\C=\{C_1,\ldots,C_m\},k)$ be an instance of \MaxTwoSAT.
We want to construct an instance $(X',\C')$ of \SAT such that $\C'$ is satisfiable if and only if at least $k$ clauses of $\C$ can be
satisfied.
Let $M=\left\lceil\log{(m+1)}\right\rceil$.
In the following reductions, we will use arithmetic constraints on $O(M)$ variables, which can be simulated by $\poly(m)$ clauses.


%
Let $T=(I,F)$ be a given nice tree-decomposition of width $\tw$.
We will create an instance of \SAT whose tree-width is at most $\tw+O(\log m)$.
We note that the additive $O(\log m)$ factor is allowed because $O^*(2^{\alpha(\tw+O(\log
m))})=O^*(2^{\alpha\tw}\poly(m))=O^*(2^{\alpha\tw})$.
For each node $i\in I$, we create variables $\{x_i\mid x\in X_i\}\cup\{s_{i,j}\mid j\in [M]\}\cup\{w_i\}$.
The value $(s_{i,*})_2$ will represent the number of satisfied clauses in the subtree rooted at $i$.
For each node $i$ and its parent $p$,
we create a constraint $x_i=x_p$ for each variable $x\in X_i\cap X_p$.
Because the nodes containing the same variable form a connected subtree in $T$, these constraints ensure that for any variable
$x \in X$, all the variables $\{x_i\mid x\in X_i\}$ take the same value.
For each node $i$, according to its type, we do as follows:
\begin{enumerate}
  \item Leaf: create a clause $(\overline{s_{i,j}})$ for each $j\in[M]$.
  \item Introduce($v$): create a constraint $s_{i,j}=s_{c,j}$ for each $j\in [M]$.
  \item Introduce($x\vee y$): create a constraint $w_i\Leftrightarrow(x_i\vee y_i)$ and a
  constraint $(s_{i,*})_2=(s_{c,*})_2+(w_i)_2$.
  \item Forget($v$): create a constraint $s_{i,j}=s_{c,j}$ for each $j\in [M]$.
  \item Join: create a constraint $(s_{i,*})_2=(s_{l,*})_2+(s_{r,*})_2$.
\end{enumerate}
Finally for the root node $r$, we create a constraint $(s_{r,*})_2\geq k$.
Now, we have obtained an instance $(X',\C')$ of polynomial size.
We note that, from the definition of a nice tree-decomposition, there exists exactly one Introduce($C$) node for each
clause $C\in\C$.
Thus, the sum $\sum_{i\in I} (w_i)_2$, which is equal to $(s_{r,*})_2$, represents the number of satisfied
clauses.
Therefore, $\C'$ is satisfiable if and only if at least $k$ clauses of $\C$ can be satisfied.
Finally, we show that the reduction preserves the tree-width.

\begin{lemma}
$\C'$ has tree-width at most $\tw+O(\log m)$.
\end{lemma}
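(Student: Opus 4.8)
The plan is to build a tree-decomposition of the primal graph $G'$ of $\C'$ directly from the given nice tree-decomposition $T$, keeping every newly created variable local to the node that created it. For a node $i$ write $P_i=\{x_i\mid x\in X_i\}$ for its primal copies and $S_i=\{s_{i,j}\mid j\in[M]\}\cup\{w_i\}$ for its auxiliary variables, so $|P_i|\le\tw+1$ and $|S_i|=M+1=O(\log m)$. As a first attempt I would give node $i$ the bag $B_i=P_i\cup S_i\cup\bigcup_c S_c$, the union ranging over the (at most two) children $c$ of $i$; hence $|B_i|\le(\tw+1)+3(M+1)=\tw+O(\log m)$.

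First I would verify which edges of $G'$ are already covered. Every clause-edge $x_iy_i$ and every edge $x_iw_i$ produced by $w_i\Leftrightarrow(x_i\vee y_i)$ lies inside $P_i\cup S_i\subseteq B_i$; each arithmetic constraint attached to $i$ (e.g.\ $(s_{i,*})_2=(s_{c,*})_2+(w_i)_2$ or the join constraint $(s_{i,*})_2=(s_{l,*})_2+(s_{r,*})_2$) induces a clique on $S_i$ together with $S_c$ for children $c$, all contained in $B_i$; and the leaf/introduce/forget constraints only relate $S_i$ to $S_c$, again inside $B_i$. Thus the \emph{only} edges left uncovered are the consistency edges $x_ix_p$ arising from the constraints $x_i=x_p$ for $x\in X_i\cap X_p$.

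The hard part is exactly these consistency edges, and I expect them, rather than the counting, to be the whole crux — this is the ``additional trick to establish local consistency without increasing the width'' promised in the overview. They form a \emph{matching} between $P_i$ and $P_p$, and covering them by simply throwing $\{x_p\mid x\in X_i\cap X_p\}$ into $B_i$ would inflate the bag to size about $2\tw$, wrecking the bound. Instead I would subdivide each arc $ip$ of $T$ into a short path of \emph{zipper} bags that slide the common copies from the parent side to the child side one at a time: writing $X_i\cap X_p=\{z^1,\dots,z^t\}$ with $t\le\tw+1$, set $Z_\ell=\{z^j_i\mid j\le\ell\}\cup\{z^j_p\mid j\ge\ell\}\cup S_i\cup S_p$ for $\ell=1,\dots,t$, ordered as $B_p,Z_1,\dots,Z_t,B_i$ along the subdivided arc. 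Each $Z_\ell$ contains the matched pair $z^\ell_iz^\ell_p$, has size at most $(t+1)+2(M+1)=\tw+O(\log m)$, and the carried term $S_i\cup S_p$ keeps the occurrences of each auxiliary vertex contiguous across the subdivision (which is necessary, since subdividing the arc would otherwise disconnect $s_{i,*}$ from its appearances in $B_i$ and $B_p$).

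It then remains to check the three tree-decomposition axioms, the connectivity axiom being the one that needs care: each copy $x_i$ occupies a contiguous tail $Z_\ell,\dots,Z_t$ of the zipper on every incident arc and meets $B_i$ at the common hub, so its bags form a connected subtree; the auxiliary vertices are confined to the star around their home node and to the zippers on the incident arcs, which are likewise connected. Putting the pieces together, the width is $\max$ over node bags and zipper bags, all of which are $\tw+O(\log m)$. I note that the matching structure of the consistency edges is precisely the hypothesis of Lemma~\ref{lem:tw:eliminate_matching}, so one could alternatively eliminate the redundant copies one by one with that lemma to reduce $G'$ toward the primal graph of $\C$ (of tree-width at most $\tw$); I would present whichever of the two routes turns out cleaner to verify in full, but I expect the explicit zipper to be the more transparent.
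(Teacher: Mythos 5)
Your proof is correct, but it takes a genuinely different route from the paper. The paper does not exhibit a tree-decomposition at all: it reduces the primal graph of $\C'$ to the empty graph by a bottom-up sequence of vertex eliminations of bounded degree, processing one leaf of $T$ at a time, eliminating $Y_i=\{x_i\mid x\in X_i\}$ via Lemma~\ref{lem:tw:eliminate_matching} (exactly the matching observation you make about the consistency edges) and then the auxiliary vertices via Lemma~\ref{lem:tw:eliminate_subset}, with an invariant controlling where the fill-in edges can appear. You instead construct the decomposition explicitly: hub bags $B_i=P_i\cup S_i\cup\bigcup_c S_c$ at the nodes, plus ``zipper'' bags subdividing each arc that slide the shared copies from parent to child one matched pair at a time while carrying $S_i\cup S_p$ to keep the auxiliary vertices' occurrences connected; your size and connectivity checks go through, and the redundancy of $\bigcup_c S_c$ in $B_i$ (those edges are already covered by the child-arc zippers) is harmless. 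Your route is more self-contained and produces an actual decomposition of the reduced instance, which is what one would want for the paper's remark about feeding these reductions to tree-decomposition-aware SAT solvers; the paper's elimination route is shorter given Lemmas~\ref{lem:tw:eliminate}--\ref{lem:tw:eliminate_layer} and, importantly, is the template reused verbatim in the harder reductions of Sections~\ref{sec:tw:3sat_is} and~\ref{sec:is-cw-to-sat-tw}, where an explicit decomposition would be considerably more painful to write down. You correctly identify the elimination alternative in your closing remark --- that alternative is precisely the paper's proof.
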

\begin{proof}
We will prove the bound by reducing the primal graph of $\C'$ into an empty graph by a
series of eliminations of degree at most $\tw+O(\log m)$.
For a node $i$, let $Y_i$ denote the vertex set $\{x_i\mid x\in X_i\}$ and $V_i$ denote the vertex set
$Y_i\cup\{w_i\}\cup\{s_{i,j}\mid j\in[M]\}$.
Starting from the primal graph of $\C'$ and the given tree-decomposition $T$ of $\C$, we eliminate the vertices as
follows.
First, we choose an arbitrary leaf $i$ of $T$.
Then, we eliminate all the vertices of $V_i$ in a certain order,
which will be described later.
Finally, we remove $i$ from $T$ and repeat the process until $T$ becomes empty.

Let $i$ be a leaf and $p$ be its parent.
If $i$ is the only child of $p$, we have $N(V_i)\subseteq V_p$.
Thus, the eliminations of $V_i$ can create edges only inside $V_p$.
If $p$ has another child $q$, we have $N(V_i)\subseteq V_p\cup\{s_{q,j}\mid j\in [M]\}$.
Thus, the eliminations of $V_i$ can create edges only inside $V_p\cup\{s_{q,j}\mid j\in [M]\}$.
Therefore, after processing each node, we can ensure that the edges created by previous eliminations
are only inside $V_i\cup\{s_{c,j}\mid c\text{ is a child of }i\text{ and }j\in [M]\}$ for each node $i$.

Now, we describe the details of the eliminations.
Let $i$ be the current node to process.
If $i$ is the root, the number of remaining vertices is $O(\log m)$.
Thus, the elimination of these vertices has degree $O(\log m)$.
Otherwise, let $p$ be the parent of $i$.
First, we eliminate the vertices $Y_i$.
Because each vertex of $Y_i$ is adjacent to at most one vertex of $Y_p$,
Lemma~\ref{lem:tw:eliminate_matching} gives the elimination of degree $|N[Y_i]\setminus Y_p|\leq |V_i|\leq \tw+O(\log m)$.
Then, we eliminate the remaining vertices $V_i\setminus Y_i$.
If $i$ is the only child of $p$, let $V_q=Y_q=\emptyset$, and otherwise, let $q$ be the another child of $p$.
By applying Lemma~\ref{lem:tw:eliminate_subset}, we obtain the elimination of degree $|N[V_i\setminus
Y_i]|-1\leq|V_i\setminus Y_i|+|V_p|+|V_q\setminus Y_q|\leq\tw+O(\log m)$.
\end{proof}


\section{Tree-width preserving reduction from \SAT to \ThreeSAT}\label{sec:tw:sat_3sat}
Let $(X,\C=\{C_1, \ldots, C_m\})$ be an instance of \SAT and $\tw$ be its tree-width.
We can use the standard reduction from \SAT to \ThreeSAT:
replacing each clause $(x_1\vee\ldots\vee x_k)$ with clauses $(x_1\vee x_2\vee y_1), (\overline{y_1}\vee
x_3\vee y_2),\ldots,(\overline{y_{k-3}}\vee x_{k-1}\vee x_k)$.

Now, we show that the tree-width of the obtained 3-CNF is at most $\tw+2$.
For each clause $(x_1\vee\ldots\vee x_k)$ of the original CNF, we have created $k-3$ new variables
$Y=\{y_1,y_2,\ldots,y_{k-3}\}$.
Let $S_i=\{y_i\}$ for $i\in [k-3]$.
Since there is no edge between $S_i$ and $S_j$ for $|i-j|>1$, from Lemma~\ref{lem:tw:eliminate_layer}, we obtain the
elimination of $Y$ of degree $2\times 1+|N(S)|-1=k+1$.
Since variables in the same clause form a clique in the primal graph, we have $\tw\geq k-1$.
Thus, the elimination has degree at most $\tw+2$.
After applying the above elimination to all the clauses, the graph coincides with the primal graph of $\C$.
Therefore, the tree-width of the obtained 3-CNF is at most $\tw+2$.


\section{Tree-width preserving reduction from \ThreeSAT to \IS}\label{sec:tw:3sat_is}
An \emph{independent set} of a graph $G=(V,E)$ is a set $S\subseteq V$ such that $G[S]$ has no edges.
\IS is the problem in which, given a graph $G$ and an integer $k$, the objective is to determine whether there exists an independent set of $G$ with size at least $k$.

Let $(X,\C=\{C_1,\ldots,C_m\})$ be an instance of \ThreeSAT.
We want to construct an instance $(G,k)$ of \IS with essentially the same tree-width such that $G$ has an independent
set of size at least $k$ if and only if $\C$ is satisfiable.
Actually, in our reductions, we choose $k$ so that any independent set has size at most $k$.
In the following reductions, we will use two gadgets depicted in Figure~\ref{fig:tw:3sat_is:gadgets}.

\begin{figure}[h]
  \centering
  \includegraphics[scale=0.33]{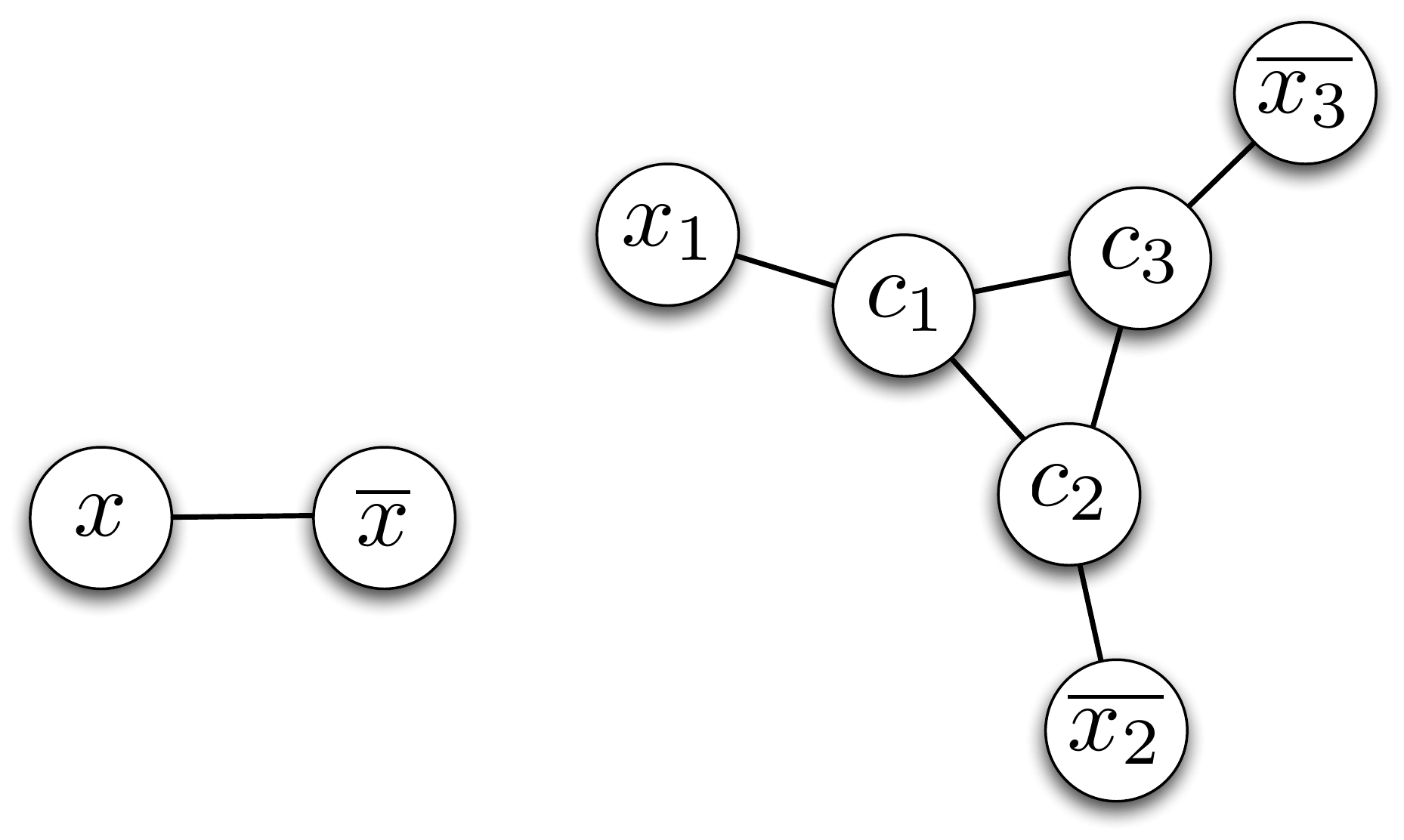}
  \caption{The variable gadget for a variable $x$ and the clause gadget for a clause $(\overline{x_1} \vee x_2 \vee x_3)$}
  \label{fig:tw:3sat_is:gadgets}
\end{figure}

A \emph{variable gadget} of a variable $x$ consists of two vertices $x$ and $\overline{x}$ connected by an edge.
Any independent set can contain at most one of $x$ and $\overline{x}$.
By choosing $k$ properly, we ensure that any independent set of size $k$ contains exactly one of them.
This gadget will represent whether a variable $x$ is assigned true (the vertex $x$ is in the independent set) or false
(the vertex $\overline{x}$ is in the independent set).

A \emph{clause gadget} of a clause $C=(x_1\vee x_2\vee\ldots\vee x_d)$ consists of $d$ vertices $\{c_i\mid i\in [d]\}$ forming a clique ($x_1,\ldots,x_d$ are literals rather than variables).
By choosing $k$ properly, we ensure that any independent set of size $k$ contains exactly one of them.
We call the operation of creating a clique $\{c_i\mid i\in[d]\}$ and
inserting edges $\{c_i\overline{x_i}\mid i\in [d]\}$ \emph{creating a clause gadget $C$}.
If an independent set contains one vertex from the clause gadget, at least one of the vertices $\{\overline{x_i}\mid
i\in [d]\}$ are not in the independent set.
By our choice of $k$, we ensure that at least one of $\{x_i\mid i\in [d]\}$ must be in the independent set.
Therefore, it acts as a clause $C$.


First, we explain a naive reduction from \ThreeSAT to \IS that does not preserve tree-width.
For each variable $x\in X$, we create a corresponding variable gadget,
and  for each clause $(x\vee y\vee z)\in\C$, we create a corresponding clause gadget.
Finally, we set $k$ as the number of variable gadgets plus the number of clause gadgets.
Now, we have obtained an instance $(G,k)$ of \IS.
From our choice of $k$, if $G$ contains an independent set of size $k$, it must contain exactly one vertex
from each variable gadget and clause gadget.
Therefore $\C$ is satisfiable.
Conversely, if $\C$ is satisfiable, we can construct an independent set of size $k$ by choosing an appropriate vertex
from each gadget.

Let $\tw$ be the tree-width of $\C$.
We omit the proof but the above naive reduction increases the tree-width of $G$ to $2\tw+O(1)$.
This is because, instead of a single variable $x$, we need to keep two vertices $x$ and $\overline{x}$ of the
variable gadget in a bag.
Intuitively, in order to preserve tree-width, we can put only one of $x$ and $\overline{x}$ in a bag.
Our solution is forgetting and remembering the state of $x$ and $\overline{x}$ along the tree-decomposition.

Now, we explain our tree-decomposition-based reduction.
Let $M=\left\lceil\log{(\tw+2)}\right\rceil$.
We will construct a graph with tree-width at most $\tw+O(\log \tw)$.
As we discussed before, the additive $O(\log \tw)$ factor is allowed.
Let $T=(I,F)$ be a given nice tree-decomposition of width $\tw$.
For each node $i\in I$, we create a variable gadget for each of $\{x_i\mid x\in X_i\}$.
If $i$ is an Introduce($x\vee y\vee z$) node, we create a clause gadget for $(x_i\vee y_i\vee z_i)$.
If $i$ is not the root, let $p$ be its parent and $P_i$ be the set $X_i\cap X_p$.
Then, for each $x\in P_i$, we connect $\overline{x_i}$ and $x_p$ by an edge.
We want to ensure that for any independent set $S$ of size $k$, $x_i$ is in $S$ if and only if $x_p$ is in $S$.
If $\overline{x_i}$ is in $S$, $x_p$ cannot be in $S$, and therefore $\overline{x_p}$ must be in $S$.
On the other hand, even if $x_i$ is in $S$, $\overline{x_p}$ can be in $S$.
In order to avoid such a situation, we will create a gadget to count the number of vertices in $(\{x_i\mid x\in
P_i\}\cup\{\overline{x_p}\mid x\in P_i\})\cap S$ (this is the most interesting part of our reduction).
Because $x_i\not\in S$ implies $\overline{x_p}\in S$, the number is always at least $|P_i|$,
and if (and only if) the number is exactly $|P_i|$, it holds that $x_i\in S\Leftrightarrow x_p\in S$ for any $x\in P_i$.
Since the nodes containing the same variable form a connected subtree, this ensures that for any independent set of
size $k$ and for any variable $x$, all the vertices $\{x_i\mid x\in X_i\}$ are in $S$ or none of them are in $S$.
By using the binary encoding, the number can be expressed by $O(\log \tw)$ variables.
Thus, we can make the gadget to increase the tree-width only by $O(\log \tw)$.

We will construct such a gadget by using the following gadget.
Let $U=\{u_1,\ldots,u_d\}$ be a set of vertices.
A \emph{counting gadget of $U$} consists of the following $d+1$ layers of variable gadgets connected by clause gadgets.
For each $a\in [d]$, the $a$-th layer consists of a variable gadget for $y_a$ and variable gadgets for each of
$\{s_{a,j}\mid j\in [M]\}$.
The last layer consists of variable gadgets for each of $\{s_{d+1,j}\mid j\in [M]\}$.
Then, for each $j\in [M]$, we create a clause gadget for $(\overline{s_{1,j}})$, and for each $a\in [d]$, we create
clause gadgets simulating an arithmetic constraint $(s_{a+1,*})_2=(s_{a,*})_2+(y_a)_2$.
Finally, for each $a\in [d]$, we connect $u_a$ and $\overline{y_a}$ by an edge.
For an independent set $S$, the number $(s_{d+1,*})_2$ in the last layer represents the size of
$\{y_a\mid a\in [d]\}\cap S$.
Since $u_a\in S$ implies $y_a\in S$, the number is at least the size of $U\cap S$.

\begin{figure}
  \centering
  \includegraphics[width=0.8\hsize]{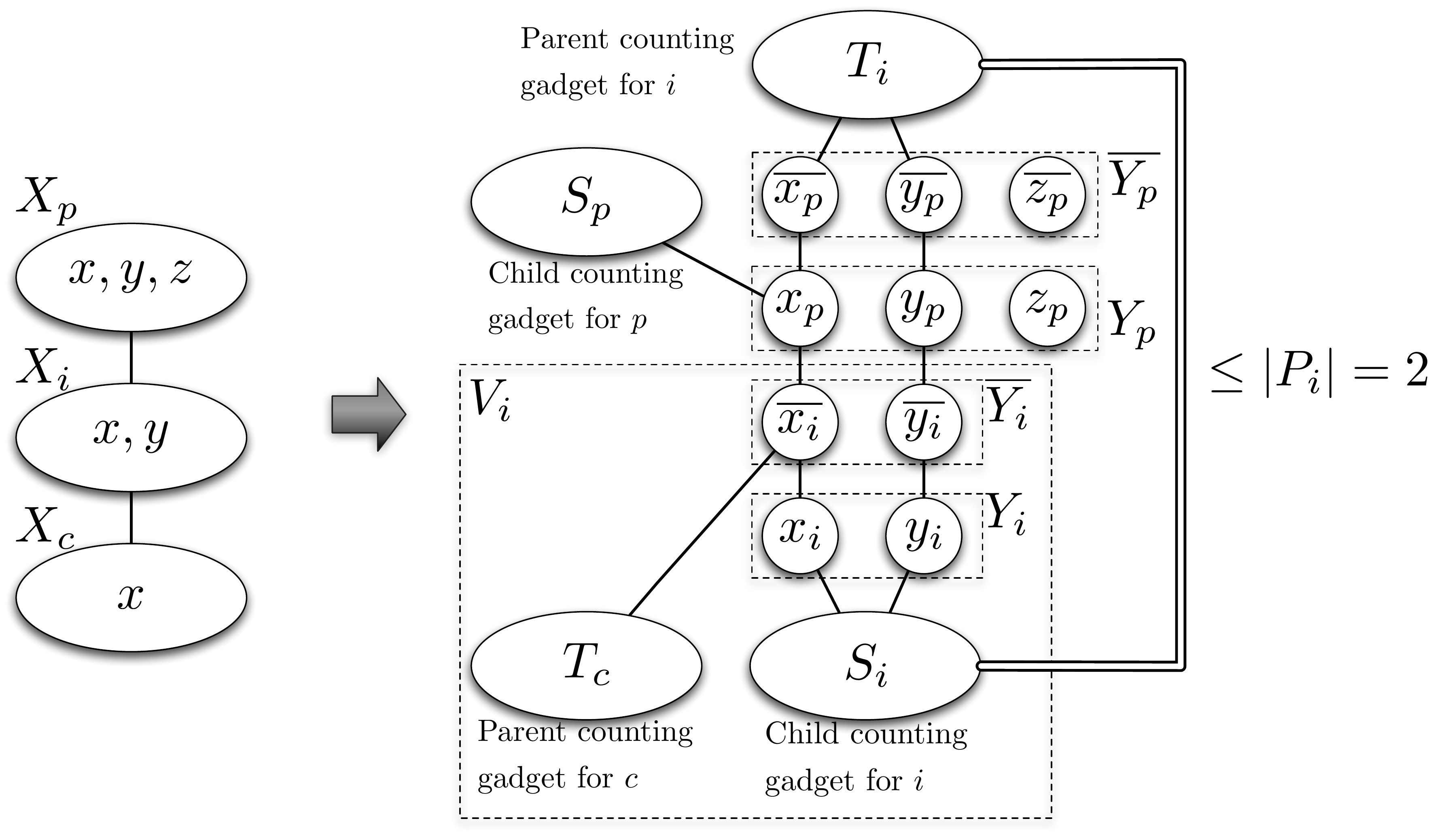}
  \caption{Reduction from \ThreeSAT to \IS}
  \label{fig:tw:3sat_is:counting}
\end{figure}

Now, we construct the gadget (see Figure~\ref{fig:tw:3sat_is:counting}).
First, we construct a counting gadget for the set $\{x_i\mid x\in P_i\}$, called a \emph{child counting gadget for $i$}.
Then, we construct a counting gadget for the set $\{\overline{x_p}\mid x\in P_i\}$, called a \emph{parent counting gadget for $i$}.
Finally, we create clause gadgets simulating the arithmetic constraint that the sum of the numbers represented by the last layers of these two counting gadgets must be at most $|P_i|$.
As we discussed before, the size $|(\{x_i\mid x\in P_i\}\cup\{\overline{x_p}\mid x\in P_i\})\cap S|$ is always at
least $|P_i|$ and becomes exactly $|P_i|$ if and only if $x_i\in S\Leftrightarrow x_p\in S$ holds for any $x\in P_i$.
Since the sum is at least the size $|(\{x_i\mid x\in P_i\}\cup\{\overline{x_p}\mid x\in P_i\})\cap S|$, the
constraint that the sum is at most $|P_i|$ implies that $x_i\in S\Leftrightarrow x_p\in S$ for any $x\in P_i$.

Now, we have obtained a graph $G$ of polynomial size and we set $k$ as the number of variable gadgets plus the number of
clause gadgets.
From our construction, for any independent set $S$ of size $k$ and a variable $x \in X$, all the vertices $\{x_i\mid i \in I\text{ s.t. }x \in X_i\}$ are in $S$ or none
of them are in $S$. 
Thus, if $G$ has an independent set of size $k$, $\C$ is satisfiable.
Conversely, if $\C$ is satisfiable, by taking an appropriate vertex from each gadget, we can obtain an independent set
of size $k$.
Finally, we show that the reduction preserves the tree-width.

\begin{lemma}
$G$ has tree-width at most $\tw+O(\log \tw)$.
\end{lemma}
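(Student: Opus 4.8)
The plan is to reuse the elimination framework from the earlier reductions: starting from $G$ (which is its own primal graph), I reduce it to the empty graph by a sequence of eliminations, each of degree at most $\tw + O(\log\tw)$, and then read off $\tw(G)\le \tw + O(\log\tw)$ from Lemmas~\ref{lem:tw:eliminate}--\ref{lem:tw:eliminate_layer}. Exactly as in the reduction from \MaxTwoSAT to \SAT, the eliminations are scheduled by processing the nodes of the given nice tree-decomposition $T$ from the leaves toward the root: I repeatedly pick a leaf $i$ of the remaining tree, eliminate all vertices ``owned'' by $i$ (its variable gadgets, its Introduce clause gadget if any, and the child/parent counting gadgets and final constraint that attach $i$ to its parent $p$), then delete $i$ and recurse. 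The invariant to maintain is that, after a subtree has been processed, every edge created so far is confined to the variable gadgets of the current node together with the $O(\log\tw)$-bit summaries (last layers) left behind by the counting gadgets of its at most two children.

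For a single node $i$ with parent $p$ and $P_i = X_i\cap X_p$, I would eliminate the owned vertices in the following order. First, eliminate every clause-gadget vertex: since the Introduce gadget is a $3$-clause and each arithmetic constraint inside the counting gadgets and the final ``sum $\le |P_i|$'' constraint is over $O(M)=O(\log\tw)$ variables, every clause has length $O(\log\tw)$, so each such vertex has degree $O(\log\tw)$ and Lemma~\ref{lem:tw:eliminate} applies; the net effect is to replace each clause gadget by a clique on the $O(\log\tw)$ variable-gadget vertices of its clause, which lie in one or two consecutive layers. Second, eliminate the child counting gadget with Lemma~\ref{lem:tw:eliminate_layer}: its $|P_i|+1$ layers each have $O(\log\tw)$ vertices, consecutive layers are the only ones joined by edges, and $N(S)$ is the $u$-set $\{x_i\mid x\in P_i\}$ together with the $O(\log\tw)$ carry vertices shared with the final constraint, so the degree is $2\,O(\log\tw)+|P_i|+O(\log\tw)-1 \le \tw + O(\log\tw)$. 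Third, eliminate the variable gadgets of $i$ by first deleting all positive vertices $\{x_i\mid x\in X_i\}$ and then all negative vertices $\{\overline{x_i}\mid x\in X_i\}$, in each case using Lemma~\ref{lem:tw:eliminate_matching} with the within-gadget edges as the required matching. Finally, eliminate the parent counting gadget (whose $u$-set is $\{\overline{x_p}\mid x\in P_i\}$) by Lemma~\ref{lem:tw:eliminate_layer}, which leaves behind only the clique on $\{\overline{x_p}\mid x\in P_i\}$ and an $O(\log\tw)$-bit summary as residue for $p$.

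The step I expect to be the crux is the third one, and the whole point of the counting-gadget construction is to make it work: this is the factor-two blow-up flagged in the text. Each variable carries two vertices $x_i,\overline{x_i}$, and the consistency apparatus touches both $x_p$ (through the edges $\overline{x_i}x_p$) and $\overline{x_p}$ (through the parent counting gadget), so a careless elimination would keep two $\Theta(\tw)$-sized literal sets alive at once and incur $2\tw$. The resolution has two ingredients. First, eliminating the positive copies strictly before the negative copies means that when Lemma~\ref{lem:tw:eliminate_matching} is applied to $\{\overline{x_i}\}$ (matched to the grandparent vertices $\{x_{gp}\}$), the partner set $\{x_i\}$ is already gone, so the only surviving neighbours outside the matching are $O(\log\tw)$ summaries; symmetrically for the positive pass. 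Second, each counting gadget communicates with the rest of the graph only through its $O(\log\tw)$-bit last layer, and the nice tree-decomposition guarantees at most two children per node, so the accumulated residue handed up the tree stays $O(\log\tw)$. The delicate bookkeeping --- verifying that under this order every created edge really is confined to the sets named in the invariant, in particular when $i$ is a Join node with two children each contributing a parent counting gadget incident to $\{\overline{x_i}\}$ --- is where the real content lies, but each individual elimination then falls under one of Lemmas~\ref{lem:tw:eliminate}--\ref{lem:tw:eliminate_layer} with degree $\tw + O(\log\tw)$.
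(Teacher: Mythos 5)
Your framework is the right one --- the same leaf-to-root elimination schedule, the same four lemmas, and you correctly identify the crux (keeping only one of the two $\Theta(\tw)$-sized literal sets ``active'' at a time). But your elimination order has a genuine flaw in exactly the place you flag as delicate: the ownership of the parent counting gadget. You eliminate the parent counting gadget for $i$ (call it $T_i$, attached to $\{\overline{x_p}\mid x\in P_i\}$) during the processing of $i$, after $S_i$, $Y_i=\{x_i\}$ and $\overline{Y_i}=\{\overline{x_i}\}$ are gone. Trace the fill-in: eliminating $S_i$ turns $N(S_i)\supseteq\{x_i\mid x\in P_i\}\cup T_{i,\mathrm{last}}$ into a clique, so the last layer of $T_i$ becomes adjacent to $Y_i$; eliminating $Y_i$ then makes it adjacent to $\overline{Y_i}$; eliminating $\overline{Y_i}$ (matched to $Y_p$ --- the parent, not the grandparent as you write) then makes it adjacent to $Y_p$. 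At that point $N(T_i)$ contains vertices of both $Y_p$ and $\overline{Y_p}$, so eliminating $T_i$ creates $Y_p$--$\overline{Y_p}$ edges: a vertex $x_p$ can end up adjacent to $|P_i|$ vertices of $\overline{Y_p}$. This destroys the hypothesis $|N(x_p)\cap\overline{Y_p}|\le 1$ of Lemma~\ref{lem:tw:eliminate_matching} when $p$ is processed, and falling back to Lemma~\ref{lem:tw:eliminate_subset} reintroduces the $2\tw$ blow-up. Reordering $T_i$ earlier within $i$'s processing does not help ($N(T_i)$ then contains $\overline{Y_p}$ together with either $Y_i$ or $\overline{Y_i}$, again of size $2\tw$). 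The paper's resolution is that $T_i$ must survive the processing of $i$ entirely: it is owned by the parent $p$, and within $p$'s processing it is eliminated only \emph{after} $Y_p$ is gone, so the clique it creates on $\overline{Y_p}\cup\cdots$ is harmless. The residue handed up by $i$ is then $Y_p$ plus the still-alive $O(\log\tw)$-vertex last layer of $T_i$, not a clique on $\{\overline{x_p}\}$.

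A second, smaller problem: you eliminate the Introduce($C$) clause gadget before $Y_i$. For a clause with both positive and negative literals this creates a clique on a mix of vertices from $Y_i$ and $\overline{Y_i}$, so up to three vertices of $Y_i$ acquire two or three neighbours in $\overline{Y_i}$, again violating the hypothesis of Lemma~\ref{lem:tw:eliminate_matching} for the $Y_i$ pass. The paper sidesteps this by eliminating the clause gadget $C_i$ after $Y_i$ and before $\overline{Y_i}$ (via Lemma~\ref{lem:tw:eliminate_subset}, with constant-size contribution). Both issues are repairable, but as written the schedule does not satisfy the hypotheses of the lemmas it invokes, and the repair is precisely the nontrivial content of the proof.
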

\begin{proof}
We will prove the bound by reducing $G$ into an empty graph by a series
of eliminations of degree at most $\tw+O(\log \tw)$.
Starting from $G$ and the given tree-decomposition $T$ of $\C$, we eliminate the vertices as
follows.

First, for each clause gadget other than the clause gadgets for $C \in \C$ (created when processing the Introduce($C$) node), we eliminate its vertices $S$.
Since the size of $N[S]$ is $O(\log\tw)$ and no two vertices in different clause gadgets are adjacent, from
Lemma~\ref{lem:tw:eliminate_subset}, we obtain eliminations of degree $O(\log \tw)$.

For a node $i \in I$, let $Y_i$ and $\overline{Y_i}$ denote the vertex sets $\{x_i\mid x\in X_i\}$ and $\{\overline{x_i}\mid x\in X_i\}$, respectively. 
If $i$ is an Introduce node, then let $C_i$ denote the set of vertices in the corresponding clause gadget, and otherwise, let $C_i$ be an empty set.
If $i$ is not the root and has a parent $p$, let $S_{i,a}$ be the
set of vertices in the variable gadgets of the $a$-th layer of the child counting gadget for $i$.
If $i$ is the root, we set $S_{i,a}$ as an empty set.
We denote the set of all the vertices of the child counting gadget by $S_i=\bigcup_{a\in [d+1]}S_{i,a}$, where
$d=|X_i\cap X_p|$.
Similarly, let $T_{i,a}$ be the
set of vertices in the variable gadgets of the $a$-th layer of the parent counting gadget for $i$ and $T_i=\bigcup_{a\in [d+1]}T_{i,a}$.
Let $V_i$ denote the union of $Y_i$, $\overline{Y_i}$, $C_i$, $S_i$, and $T_c$ for each child $c$ of $i$.
Now, we eliminate each $V_i$ as follows.

First, we choose an arbitrary leaf $i$ of the tree $T$.
Then, we eliminate all the vertices of $V_i$ in a certain order, which will be described later.
Finally, we remove $i$ from $T$ and repeat the process until $T$ becomes empty.

Since $N(V_i)\subseteq Y_p\cup T_{i,d+1}$ holds for a leaf $i$ and its parent $p$, where $d=|X_i\cap X_p|$,
the eliminations of $V_i$ can create edges only within $Y_p\cup T_{i,d+1}$.
Thus, after processing each node, we can ensure that the edges created by previous eliminations only connect vertices in the same vertex set $Y_p\cup T_{i,d+1}$ for some node $i$, its parent $p$, and $d=|X_i\cap X_p|$.

Now, we describe the details of the eliminations.
Let $i$ be the current node to process.
If $i$ is the root, the number of remaining vertices is $O(\log \tw)$.
Thus, the elimination of these vertices has degree $O(\log \tw)$.
Otherwise, let $p$ be the parent of $i$, $d=|X_i\cap X_p|$, and $J$ be the set of children of $i$ in the original
tree-decomposition.
We note that from the definition of nice tree-decompositions, the size of $J$ is at most two.
First, we eliminate $S_i$.
Since there are no edges between $S_{i,a}$ and $S_{i,b}$ for any $|a-b|>1$, from Lemma~\ref{lem:tw:eliminate_layer},
the elimination has degree $2(M+2)+|N(S_i)|-1=O(\log\tw)+|Y_i\cup T_{i,d+1}|\leq\tw+O(\log\tw)$.
Then, we eliminate $Y_i$.
Note that each vertex $x_i\in Y_i$ can be adjacent only to the vertex $\overline{x_i}\in\overline{Y_i}$, vertices in $Y_i\cup C_i\cup T_{i,d+1}$ (as we have eliminated $S_i$), and vertices in $T_{c,|X_c\cap X_i|+1}$ for a child $c\in J$ (as $\overline{x_c}$ is adjacent to $x_i$ and the path $T_{c,|X_c\cap X_i|+1}$-$S_c$-$x_c$-$\overline{x_c}$ is eliminated when processing $c$).
Hence by Lemma~\ref{lem:tw:eliminate_matching}, the elimination has degree $|N[Y_i]\setminus\overline{Y_i}|\leq|Y_i\cup C_i\cup T_{i,d+1}|+\sum_{c\in J}|T_{c,|X_c\cap X_i|+1}|\leq\tw+O(\log\tw)$.
Next, we eliminate $C_i$.
From Lemma~\ref{lem:tw:eliminate_subset}, the elimination has degree $N[C_i]-1\leq 5+|\overline{Y_i}\cup
T_{i,d+1}|+\sum_{c\in J}|T_{c,|X_j\cap X_i|+1}|\leq \tw+O(\log\tw)$.
Then, for each child $c\in J$, we eliminate $T_c$.
Since there are no edges between $T_{c,a}$ and $T_{c,b}$ for any $|a-b|>1$, from Lemma~\ref{lem:tw:eliminate_layer},
the elimination has degree $2(M+2)+|N(T_c)|-1=O(\log\tw)+|\overline{Y_i}\cup T_{i,d+1}|+\sum_{j\in
J}|T_{j,|X_j\cap X_i|+1}|\leq\tw+O(\log\tw)$.
Finally, we eliminate $\overline{Y_i}$.
Since each vertex $\overline{x_i}\in \overline{Y_i}$ can be adjacent only to the vertex $x_p\in Y_p$ and vertices in
$\overline{Y_i}\cup T_{i,d+1}$, from Lemma~\ref{lem:tw:eliminate_matching}, the elimination has degree
$|N[\overline{Y_i}]\setminus Y_p|\leq |\overline{Y_i}\cup T_{i,d+1}|\leq \tw+O(\log\tw)$.
\end{proof}


\section{Tree-width preserving reduction from \IS to \MaxTwoSAT}\label{sec:tw:is_max2sat}
Let $(G=(V,E), k)$ be an instance of \IS.
We use the following naive reduction to make an instance $(X',\C',k')$ of \MaxTwoSAT.

For each vertex $v\in V$, we create a variable $x_v$ and add a clause $(x_v)$ of length one.
This variable represents whether a vertex $v$ is in an independent set or not.
Then, for each edge $uv\in E$, we create $|V|+1$ copies of a clause $(\overline{x_u}\vee\overline{x_v})$.
This clause simulates the constraint that at most one of $u$ and $v$ can be in an independent set.
Finally, we set $k'=|E|(|V|+1)+k$.

If there exists an independent set $S$ of size at least $k$, we can satisfy at least $k'$
clauses by setting $x_v=\true$ if and only if $v\in S$.
If there exists an assignment that satisfies $k'$ clauses, it must satisfy all the constraints
$(\overline{x_u}\vee\overline{x_v})$.
Thus, we can construct an independent set $S$ of size at least $k$ by taking $v\in S$ if and only if $x_v=\true$.
Because the primal graph of the obtained CNF $\C'$ is completely the same as the original graph $G$, they have the same
tree-width.


\section{From Independent Set parameterized by clique-width to SAT parameterized by
tree-width}\label{sec:is-cw-to-sat-tw}

In this section, we show a reduction from \IS of bounded clique-width to \SAT of bounded tree-width.
We first define the notion of clique-width formally.
The \emph{clique-width} of a graph $G$ is the minimum number of labels needed to construct $G$ by means of the following four operations.
\begin{itemize}
\itemsep=0pt
\item Creation of a vertex $v$ with a label $i$ (denoted by $i(v)$).
\item Disjoint union of two labeled graphs $G$ and $H$ (denoted by $G \oplus H$).
\item Joining each vertex with label $i$ to each vertex with label $j$, where $i \neq j$ (denoted by $\eta_{i,j}$).
\item Renaming label $i$ to label $j$ (denoted by $\rho_{i \to j}$).
\end{itemize}
Every graph can be defined by an algebraic expression using these four operations. For instance,
a chordless path $P_4$ on four consecutive vertices $a,b,c,d$ can be defined as follows:
\[
  \eta_{3,2}(3(d) \oplus \rho_{3 \to 2}(\rho_{2 \to 1}(\eta_{3,2}(3(c) \oplus \eta_{2,1}(2(b) \oplus 1(a)))))).
\]
Such an expression is called a \emph{$k$-expression} if it uses at most $k$ different labels.
Thus, the clique-width of $G$, denoted by $\cw(G)$, is the minimum $k$ for which there exists a $k$-expression defining $G$.
For instance, from the above example, we conclude $\cw(P_4) \leq 3$.

It is known that $\cw(G) \leq 2^{\tw(G)}$ holds for any graph $G$~\cite{Corneil:2006ic}.
However, bounded clique-width does not imply bounded tree-width.
For example, the complete graph of $n$ vertices has tree-width $n-1$ and clique-width $2$.

Let $(G=(V,E), k)$ be an instance of \IS of $n$ vertices.
Let $\cw$ be the clique-width of $G$.
We want to construct an instance $(X,\mathcal{C})$ of SAT with tree-width $\cw + O(\log n)$ such that $\mathcal{C}$ is satisfiable if and only if there is an independent set of size $k$ in $G$.
Let $M = \lceil \log (n+1) \rceil$.

Let $O$ be the set of operations in the $\cw$-expression of $G$.
Note that the $\cw$-expression of $G$ can be represented as a tree, which we call the \emph{expression tree} of $G$, and we will often identify an operation and the corresponding node in the expression tree.
For each operation $o \in O$, we associate a subgraph $G_o$ constructed by performing operations in the subtree rooted at the operation $o$.
For each operation $o \in O$,
we introduce variables $\{o_i \mid i \in [\cw]\} \cup \{s_{o,i} \mid i \in [M]\}$.
For each $i \in [\cw]$, the variable $o_i$ represents whether vertices with label $i$ are chosen to be an independent set in $G_o$, and $\{s_{o,i} \mid i \in [M]\}$ represents the size of the independent set in $G_o$.

We add constraints as follows depending on the type of the operation $o$.
\begin{itemize}
\itemsep=0pt
\item $o = i(v)$: create a constraint $(s_{o,*})_2 = (o_i)_2$.
\item $o = c \oplus c'$: create two constraints $c_i\rightarrow o_i$ and $c'_i\rightarrow o_i$ for each $i \in
[\cw]$, and a constraint $(s_{o,*})_2 = (s_{c,*})_2 + (s_{c',*})_2$.
\item $o = \eta_{i, j}(c)$: create a constraint $o_i = c_i$ for each $i \in [\cw]$, a constraint $\overline{o_i} \vee
\overline{o_j}$, and a constraint $(s_{o,*})_2 = (s_{c,*})_2$.
\item $o = \rho_{i \to j}(c)$: create a constraint $o_k=c_k$ for each $k\in [\cw]\setminus\{i,j\}$ and three constraints
$(s_{o,*})_2 = (s_{c,*})_2$, $o_j = c_i \vee c_j$, and $(\overline{o_i})$.
\end{itemize}
Finally, for the root operation $o \in O$, we add a constraint $(s_{o,*})_2 \geq k$.
Note that for an operation $o=c\oplus c'$, the created constraints $c_i\rightarrow o_i$ and $c'_i\rightarrow o_i$
actually perform as a constraint $o_i=(c_i\vee c'_i)$.
This is because if there exists a satisfiable assignment for which both of $c_i$ and $c'_i$ are set to false but $o_i$
is set to true, we can obtain another satisfiable assignment by setting $o_i$ and the variables
connected by equality constraints to false.

Now we have obtained an instance $(X,\mathcal{C})$ of polynomial size.
As the above construction directly simulates the dynamic programming for solving \IS, $\mathcal{C}$ is satisfiable if and only if there is an independent set of size at least $k$.
Now we show that the tree-width of the instance $(X,\mathcal{C})$ has essentially the same clique-width of the graph $G$.
\begin{lemma}
  $\mathcal{C}$ has tree-width at most $\cw + O(\log n)$.
\end{lemma}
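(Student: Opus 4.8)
The plan is to follow the same elimination strategy used in the previous reductions: I will reduce the primal graph of $\mathcal{C}$ to the empty graph by a series of eliminations, each of degree at most $\cw+O(\log n)$, and then appeal to Lemma~\ref{lem:tw:eliminate} (applied through Lemmas~\ref{lem:tw:eliminate_subset} and~\ref{lem:tw:eliminate_matching}). For each operation $o$ I write $L_o=\{o_i\mid i\in[\cw]\}$ for its \emph{label} vertices, $S_o=\{s_{o,i}\mid i\in[M]\}$ for its \emph{size} vertices, and $V_o=L_o\cup S_o$; note that $|V_o|=\cw+M=\cw+O(\log n)$. I will process the nodes of the expression tree from the leaves toward the root, eliminating all of $V_o$ at the moment $o$ becomes a leaf of the remaining tree.

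The key is a locality observation. The clauses connect $V_o$ only to the variables of its parent $p$ and, through the size constraint at a binary node $o\oplus o'$, to the size vertices $S_{o'}$ of its sibling. Concretely, the arithmetic constraints make $S_o\cup S_p\cup S_{o'}$ a clique, whereas the label constraints relate $L_o$ to $L_p$ by a \emph{matching}: for every operation type one checks that each $o_i$ has at most one neighbor in $L_p$ (for $\oplus$ and $\eta$ this is $o_i$–$p_i$, and for $\rho_{a\to b}$ the vertices $o_a,o_b$ both point only to $p_b$). I therefore maintain the invariant that, once all descendants of $o$ have been eliminated, the surviving neighbors of $V_o$ lie in $V_p\cup S_{o'}$, and that the eliminations performed so far have created new edges only inside $L_o$ (among label vertices) and inside $S_p\cup S_{o'}$ (among size vertices), up to $O(\log n)$ additional size-related incidences.

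Granting the invariant, I eliminate a node $o$ with parent $p$ in two steps. First I eliminate $S_o$: since $N[S_o]\subseteq S_o\cup S_p\cup S_{o'}$ together with at most the single vertex $o_i$ coming from a creation node $o=i(v)$, Lemma~\ref{lem:tw:eliminate_subset} gives an elimination of degree $O(\log n)$ that cliques up only $S_p\cup S_{o'}$. Then I eliminate $L_o$ using the matching structure: because each already-eliminated child $c$ contributed, via $G/L_c$, only a clique inside $L_o$ (its sole surviving neighbors being its parent-images in $L_o$), the set $N[L_o]\setminus L_p$ is contained in $L_o$ together with $O(\log n)$ size vertices, so Lemma~\ref{lem:tw:eliminate_matching} applied with $X=L_o$ and $Y=L_p$ yields an elimination of degree at most $\cw+O(\log n)$. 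Removing $V_o$ in this order creates new edges only inside $V_p\cup S_{o'}$, which re-establishes the invariant for $p$ and its sibling. The root is handled directly, its $\cw+M$ remaining vertices being eliminated with degree $\cw+O(\log n)$.

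The main obstacle is establishing the invariant precisely, in particular verifying that eliminating the label vertices $L_c$ of an already-processed child adds edges only \emph{within} $L_o$ rather than linking $L_o$ to grandchild labels. This is exactly where the matching condition of Lemma~\ref{lem:tw:eliminate_matching} is used: processing strictly bottom-up guarantees that when $L_c$ is removed its only surviving neighbors are its single parent-images in $L_o$, so $G/L_c$ cliques up a subset of $L_o$ and label neighborhoods never propagate across nodes. The $\rho$ and $\eta$ operations deserve the most care, since they introduce the intra-node edges $o_a$–$o_b$ and the fan-in $o_b$–$c_a,c_b$; I will check that these keep each label variable's parent-degree at most one and keep all newly created label edges confined to a single $L_o$, while the creation nodes $o=i(v)$ are the only source of label–size incidences and contribute only a constant number per node, so that every elimination degree stays $\cw+O(\log n)$.
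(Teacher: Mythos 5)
Your overall strategy is the paper's: process the expression tree bottom-up, eliminate each operation's $\cw+O(\log n)$ variables when it becomes a leaf of the remaining tree, use Lemma~\ref{lem:tw:eliminate_matching} for the label block against the parent's label block and Lemma~\ref{lem:tw:eliminate_subset} for the size block, and observe that each elimination only creates edges near the parent. The final bound does come out to $\cw+O(\log n)$ along these lines. However, the invariant you propose to maintain is false, and the step that relies on it would fail as stated. You claim that eliminations never create label--size edges, so that $N[S_o]\subseteq S_o\cup S_p\cup S_{o'}$ (plus one creation vertex) and the elimination of $S_o$ has degree $O(\log n)$, cliquing only size vertices. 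But consider a creation-node child $c=i(v)$ of $o$: eliminating $S_c$ cliques $N(S_c)\supseteq\{c_i\}\cup S_o\cup S_{c'}$, gluing the label vertex $c_i$ to $S_o$; the subsequent elimination of $L_c$ then cliques $N(L_c)\supseteq L_o\cup S_o\cup S_{c'}$, so \emph{every} vertex of $L_o$ becomes adjacent to \emph{every} vertex of $S_o$. (The same contamination propagates up through non-creation nodes, since once $L_c$ is glued to $S_c$ the elimination of the blob $V_c$ cliques its whole external neighborhood inside $V_o\cup S_{c'}$.) So when you reach $o$, the set $S_o$ typically has all of $L_o$ in its neighborhood: your ``$G/L_c$ cliques only a subset of $L_o$'' and ``degree $O(\log n)$'' claims are both wrong.

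The repair is exactly the paper's coarser invariant: do not try to keep label edges and size edges separate, and only assert that after processing a node $o$ all created edges lie inside $V_p\cup S_q$ (where $q$ is the sibling), i.e.\ treat $V_o$ as a single blob whose internal structure may be an arbitrary clique. Under that invariant both of your eliminations still fit the budget --- $|N[S_o]|-1\le |S_o|+|L_o|+|S_p|+|S_q|-1=\cw+O(\log n)$ and $|N[L_o]\setminus L_p|\le |L_o|+|S_p|+|S_q|=\cw+O(\log n)$, the matching hypothesis $|N(o_i)\cap L_p|\le 1$ being preserved since no created edge ever touches $L_p$ before $o$ is processed --- and the lemma follows. (The paper happens to eliminate the label block before the size block within each node; with the corrected invariant either order works, but the constants you quote for the size eliminations must be $\cw+O(\log n)$, not $O(\log n)$.)
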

\begin{proof}
  We will prove the bound by reducing the primal graph of $\mathcal{C}$ into an empty graph by a series of eliminations
  of degree at most $\cw + O(\log n)$.
  For an operation $o$, let $Y_o$ denote the vertex set $\{o_i \mid i \in [\cw]\}$,
  and $V_i$ denote the vertex set $Y_o \cup \{s_{o,i} \mid i \in [M]\}$.

  Starting from the primal graph of $\mathcal{C}$ and the given $\cw$-expression of $G$, we eliminate the vertices as follows.
  First, we choose an arbitrary operation $o$ corresponding to a leaf in the expression tree.
  Then, we eliminate all the vertices of $V_i$ in a certain order, which will be described later.
  Finally, we remove $o$ from the expression tree and repeat the process until the expression tree becomes empty.

  Let $o$ be an operation corresponding to a leaf of the current expression tree, and $p$ be its parent.
  If $o$ is the only child of $p$, it holds that $N(V_o) \subseteq V_p$.
  Thus, the eliminations of $V_o$ can create edges only inside $V_p$.
  If $p$ has another child $q$, it holds that $N(V_o) \subseteq V_p \cup \{s_{q,i} \mid i \in [M]\}$.
  Thus, the eliminations of $V_i$ can create edges only inside $V_p \cup \{s_{q,i} \mid i \in [M]\}$.
  Therefore, after processing each node, we can ensure that the edges created by previous eliminations are only inside $V_o \cup \{s_{c,i} \mid c\text{ is a child of }o\text{ and }i \in [M]\}$ for each operation $o$.

  Now, we describe the details of the eliminations.
  Let $o$ be the current operation to process.
  If $o$ is the root, the number of remaining vertices is $\cw + O(\log n)$.
  Thus, the elimination of these vertices has degree $\cw + O(\log n)$.
  Otherwise, let $p$ be the parent of $o$.
  First, we eliminate vertices $Y_o$.
  Because each vertex of $Y_i$ is adjacent to at most one vertex of $Y_p$, Lemma~\ref{lem:tw:eliminate_matching} gives the elimination of degree $|N[Y_o] \setminus Y_p| \leq |V_o| \leq \cw + O(\log n)$.
  Then, we eliminate the remaining vertices $V_o \setminus Y_o$.
  If $o$ is the only child of $p$, let $V_q=Y_q=\emptyset$, and otherwise, let $q$ be the another child of $p$.
  By applying Lemma~\ref{lem:tw:eliminate_subset}, we obtain the elimination of degree $|N[V_o \setminus Y_o]|-1 \leq
  |V_o \setminus Y_o| + |V_p| + |V_q\setminus Y_q| \leq \cw + O(\log n)$.
\end{proof}


\section{From 3SAT parameterized by tree-width to Independent Set parameterized by clique-width}\label{sec:tw-3sat-to-cw-is}

Let $(X,\mathcal{C} = \{C_1,\ldots,C_m\})$ be an instance of $3$-SAT with tree-width $\tw$.
We want to construct an instance $(G,k)$ of \IS with clique-width $\tw + O(\log \tw)$ such that $G$ has an independent set of size at least $k$ if and only if $\mathcal{C}$ is satisfiable.
For this purpose, we use the same construction of $(G,k)$ as in
Section~\ref{sec:tw:3sat_is}.
Hence, it suffices to show that the graph $G$ has clique-width $\tw + O(\log \tw)$.

\begin{lemma}
  The graph $G$ has clique-width at most $\tw + O(\log \tw)$.
\end{lemma}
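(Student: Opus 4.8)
The goal is to bound the clique-width of the graph $G$ produced by the reduction of Section~\ref{sec:tw:3sat_is}. Unlike the tree-width lemmas, here I cannot use the elimination machinery; instead I must exhibit an actual $k$-expression for $G$ with $k = \tw + O(\log\tw)$. The plan is to build the $k$-expression by traversing the given nice tree-decomposition $T=(I,F)$ of $\mathcal{C}$ bottom-up, processing the gadgets node by node exactly as they were created, and to reuse the tree structure of $T$ as the skeleton of the expression tree. The key principle of clique-width is that labels are the only mechanism for connecting vertices: at the moment two vertices must be joined by an edge they must carry distinct labels that can be targeted by an $\eta_{i,j}$ operation, and once a vertex no longer needs new incident edges its label can be renamed (via $\rho$) and recycled. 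So the whole argument reduces to a careful accounting of how many vertices simultaneously need \emph{distinct, still-active} labels.

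First I would fix, for each node $i$ and each variable $x \in X_i$, that the two gadget vertices $x_i, \overline{x_i}$ must eventually be joined to their counterparts across the arc to the parent $p$ (the edges $\overline{x_i}x_p$) and into the counting gadgets. The crucial observation is the same locality that made tree-width small: at any point in the bottom-up construction, the only vertices that still require future edges are those in $V_i$ together with the "interface" layers $T_{c,\cdot}$ and $Y_p$ that cross a single arc of $T$. By the definition of a tree-decomposition, $|X_i| \le \tw+1$, so the number of variable gadgets alive at the interface of one node is $O(\tw)$, and each counting gadget contributes only its last layer $\{s_{d+1,j}\mid j\in[M]\}$ of $O(\log\tw)$ vertices to the interface. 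I would therefore assign a distinct label to each such interface vertex, use $\eta_{i,j}$ to install the required cross-arc edges and the within-layer clause-gadget edges, and then immediately $\rho$-rename every vertex whose incident edges are all present down to a single "dead" label. This recycling is what keeps the label count at $\tw + O(\log\tw)$ rather than growing with $|I|$.

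Concretely, the key steps in order are: (1)~establish a labeling invariant stating that after processing the subtree rooted at $i$, exactly the $O(\tw + \log\tw)$ interface vertices carry distinct live labels and all other vertices of $G_i$ carry one common dead label; (2)~give the $k$-expression for a leaf, which is trivial; (3)~for Introduce, Forget, and Introduce($C$) nodes, show how to create the new variable/clause gadget vertices with fresh labels, install their internal edges by $\eta$, connect them across the single child arc, and re-establish the invariant by renaming; (4)~for Join nodes, take the disjoint union $G_l \oplus G_r$ and then merge the two interfaces — this is where the counting-gadget edges linking the child's last layer $T_{c,\cdot}$ to the parent must be installed, and where I must be careful that the two children's dead labels and live labels do not collide. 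Throughout, the internal edges of a single counting gadget are installed as that gadget is built, so they never contribute to the interface.

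The main obstacle I expect is the Join node together with the counting gadgets, for two reasons. First, a disjoint union forces both operands to have been renamed into a \emph{common} label namespace, so I must argue that the live labels of the left and right interfaces can be made to coincide appropriately (merging the copies of $X_l = X_r = X_i$) while the two dead labels are unified without accidentally creating spurious edges; since $\eta$ joins \emph{all} pairs of vertices with two given labels, any label reused for genuinely non-adjacent vertices across the two sides is dangerous, and I must check that every $\eta$ I apply targets only vertices that really should be adjacent. Second, the counting gadget introduces the extra $O(\log\tw)$ "size" vertices per layer, and I must verify that only the last layer ever needs to persist across an arc while the $O(\tw)$ intermediate layers can be fully wired and killed locally — this is exactly the layered, no-edges-between-far-layers structure exploited by Lemma~\ref{lem:tw:eliminate_layer} in the tree-width proof, now reinterpreted as a bound on how many labels a counting gadget keeps alive at once. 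Once the invariant survives both obstacles, the bound $\cw(G) \le \tw + O(\log\tw)$ follows immediately, since at every operation at most $\tw+1$ variable-gadget interface labels plus $O(\log\tw)$ size-variable labels plus a constant number of dead/scratch labels are in use.
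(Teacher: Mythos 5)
Your proposal follows essentially the same route as the paper's proof: a bottom-up traversal of the nice tree-decomposition that builds the $k$-expression node by node, maintains the invariant that only the interface vertices (the $\overline{x_i}$ for $x\in X_i$ and the last layer of the child counting gadget) keep distinct live labels while everything else is renamed to a single dead label, and charges $O(\log\tw)$ auxiliary labels to each counting-gadget layer so that intermediate layers are wired and killed locally. The paper's special label $\#$ is exactly your ``dead'' label, and its case analysis over Introduce, Introduce($C$), Forget, and Join nodes matches your plan, so no substantive difference to report.
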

\begin{proof}
  Let $T = (I, F)$ be a nice tree-decomposition of $(X,\mathcal{C})$.
  We inductively construct $G$ by processing each node of $T$ in a bottom-up manner.

  For a node $i \in I$, let $I_i^\downarrow \subseteq I$ be the set consisting of $i$ itself and descendants of $i$.
  Then, we define $X_i^\downarrow$ and $\mathcal{C}_i^\downarrow$ as the sets of variables and constraints, respectively, contained in a bag of $I_i^\downarrow$.
  Let $G_i^\downarrow$ be the subgraph of $G$ induced by variable gadgets corresponding to vertices in $X_i^\downarrow$, clause gadgets corresponding to clauses in $\mathcal{C}_i^\downarrow$, child counting gadgets for nodes in $I_i^\downarrow$ and parent counting gadgets for nodes in $I_i^\downarrow \setminus \{i\}$.
  At node $i$, we will construct the graph $G_i^\downarrow$.

  We introduce a special label $\#$; if a vertex is once labeled $\#$, then we will never relabel or connect new edges to that vertex.
  For each $i \in I$, we ensure that vertices $\overline{x}_i$ for $x \in X$ and vertices
  in the last layer of the child counting gadget for $i$ have distinct labels, and all the other vertices in
  $G_i^\downarrow$ are labeled $\#$ after processing the node $i$.

  Suppose that we have constructed $G_c^\downarrow$ for a child node $c$ of $i$ (if $i$ is a Join node, we also have another graph $G_{c'}^\downarrow$ for the other child $c'$), and we want to construct a graph $G_i^\downarrow$.
  We have five cases depending on the type of the node $i$.

  (i) If $i$ is a leaf node, we have nothing to do.

  (ii) Suppose $i$ is an Introduce($x$) node.
  Let $X_c = \{x^1,\ldots,x^d\}$ for some $d \leq \tw$.
  Note that $X_i = \{x^1,\ldots,x^d,x\}$ holds.

  For each $j \in [d]$, we do the following:
  We first construct a variable gadget for $x^j$ using new labels.
  We then connect $\overline{x}^j_c$ to $x^j_i$, and the label of $\overline{x}^j_c$ is set to $\#$.
  Next, we create the $j$-th layer of the child counting gadget $S_i$ for $i$, and connect $\overline{x}^j_i$ to it.
  This can be done using auxiliary $O(\log \tw)$ labels.
  Then, the labels of $(j-1)$-th layer (if exists) of $S_i$ and the label of $x^j_i$ are set to $\#$.
  Finally, we create the $j$-th layer of the parent counting gadget $T_c$ for $c$, and connect $\overline{x}^j_i$ to it.
  This can be done using auxiliary $O(\log \tw)$ labels.
  Then, the labels of $(j-1)$-th layer (if exists) of $T_c$ are set to $\#$.

  After processing $x^1,\ldots,x^d$, we create a variable gadget for $x_i$ and connect $x_i$ with the $(d+1)$-th layer of $S_i$ for $i$.
  Then, the labels of $d$-th layer (if exists) of $S_i$ are set to $\#$.
  Finally, we connect the last layers of $T_c$ and the child counting gadget $S_c$ for $c$ to make the constraint $|\{x^j_c \mid j \in [d]\} \cup \{\overline{x}^j_i \mid j \in [d]\} \cap S| \leq d$ for any independent set $S$.
  In total, we only need $\tw + O(\log \tw)$ labels.

  (iii) Suppose $i$ is an Introduce($x \vee y \vee z$) node.
  The construction is very similar to the case (ii).
  The only difference is that we have to make a clause gadget corresponding to the
  clause $(x \vee y \vee z)$, where $x$, $y$, and $z$ are literals.
  Recall that, in the case (ii), the label of $x^j_i$ is set to $\#$ after the $j$-th iteration.
  Instead, if $x^j_i$ is the literal used in the clause, then we keep it using a new label.
  After the $d$-th step, we construct a clause gadget using these kept literals.
  We only need $O(1)$ auxiliary labels for this construction since the clause $(x \vee y \vee z)$ has only three literals.

  (iv) If $i$ is a Forget($v$) node or (v) a Join node, then the construction is almost the same as (ii), and we omit the detail.


  To summarize, we can construct $G$ using $\tw + O(\log \tw)$ labels.
\end{proof}


\section{Exactly Parameterized $\NL$}\label{sec:pw}

By extending the classical complexity class $\NL$ (Non-deterministic Logspace), we define a class of parameterized
problems $\NLk$ (Exactly Parameterized $\NL$) which can be solved by a non-deterministic Turing machine with the space
of $k+O(\log n)$ bits.

\begin{definition}[$\NLk$]
A parameterized problem $(L, \kappa)$, where, $L\subseteq \{0,1\}^*$ is a language and $\kappa:
\{0,1\}^*\rightarrow\mathbb{N}$ is a parameterization, is in $\NLk$ if there exists a
polynomial $p:\mathbb{N}\rightarrow \mathbb{N}$ and a verifying polynomial-time deterministic Turing machine $M:
\{0,1\}^*\times \{0,1\}^*\rightarrow\{0,1\}$ with four binary tapes, a read-only input tape, a read-only read-once
certificate tape, and two read/write working tapes called the $k$-bit tape and the logspace tape with the following properties.
\begin{itemize}
  \item For any input $x \in \{0,1\}^*$, it holds that $x\in L$ if and only if there exists a certificate
  $y\in\{0,1\}^{p(|x|)}$ such that $M(x,y)=1$.
  \item For any $x\in \{0,1\}^*$ and $y\in \{0,1\}^{p(|x|)}$, the
  machine $M$ uses at most $\kappa(x)$ space from the $k$-bit tape and $O(\log |x|)$ space from the logspace tape.
\end{itemize}
\end{definition}

Note that the machine $M$ is not allowed to use $O(\kappa(x))$ bits from the $k$-bit tape but at most $\kappa(x)$
bits.
This is why we use two separated working tapes instead of one long working tape of length $\kappa(x)+O(\log |x|)$; in
the latter case, because there is only one head, it may be difficult to simulate a random-access $\kappa(x)$-bit array.

We give several examples of problems in $\NLk$.
For all the problems in Lemma~\ref{lem:pw:nlk2}, the current fastest algorithms take $O^*(2^n)$
time~\cite{DBLP:journals/mst/BodlaenderFKKT12}.

\begin{lemma}\label{lem:pw:nlk1}
\SAT, \ThreeSAT, \MaxTwoSAT, and \IS parameterized by path-width are in $\NLk$.
\end{lemma}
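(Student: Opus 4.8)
The plan is to describe, for each of the four problems parameterized by path-width, a nondeterministic verifier that fits the $\NLk$ model: it uses at most $\pw$ bits on the $k$-bit tape and $O(\log n)$ bits on the logspace tape, reads a certificate exactly once, and accepts an input if and only if a valid certificate exists. The key structural fact I would exploit is that a nice path-decomposition is a \emph{linear} sequence of bags, and that when we scan it from the leaf end to the root, each node introduces or forgets exactly one vertex (or introduces one clause/edge), so at any moment the set of ``live'' variables is precisely the current bag, which has size at most $\pw+1$. This locality is exactly what lets a machine with only $\pw$ bits of random-access storage carry out the standard dynamic programming: the $k$-bit tape holds the current truth assignment (or independent-set membership indicator) restricted to the current bag.

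First I would fix a nice path-decomposition $(T,\chi)$ as part of the input, where the nodes form a path $i_1,\dots,i_\ell$ with $X_{i_1}=X_{i_\ell}=\emptyset$ and $\ell=\poly(n)$. The certificate $y$ encodes, for each Introduce($v$) node, the value assigned to $v$ (true/false for \SAT variants, in/out for \IS). The verifier processes the nodes in order, maintaining on the $k$-bit tape a bit for each variable currently in the bag; an index into the tape, a clock counting clauses satisfied, and the node pointer all live on the logspace tape since they are $O(\log n)$ bits. At an Introduce($v$) node the machine reads one fresh certificate bit and writes it into the $k$-bit tape slot for $v$ (maintaining a $\pw$-bounded mapping from variables to slots on the logspace tape); at a Forget($v$) node it frees that slot. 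For \SAT and \ThreeSAT, at each Introduce($C$) node it checks in $O(\log n)$ time that $C$ is satisfied by the bits currently present, rejecting otherwise. For \MaxTwoSAT it instead increments an $O(\log m)$-bit counter whenever an introduced clause is satisfied and checks the counter against $k$ at the end. For \IS, at each Introduce($uv$) edge-node it verifies that $u$ and $v$ are not both selected, and maintains an $O(\log n)$-bit size counter to compare against $k$ at the root.

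The correctness direction is routine: a satisfying assignment (resp.\ a large independent set) yields a certificate that the verifier accepts, and conversely the read-once certificate reconstructs a global assignment because the connected-subtree property guarantees each variable's value is fixed once at its Introduce node and consistently used until its Forget node. I expect the main subtlety to be the \emph{read-once} constraint on the certificate together with the exact $\pw$-bit budget: I must guarantee each variable's truth value is read from the certificate exactly once (at its unique Introduce node, since by the path-decomposition properties each vertex has a single introduce and single forget), and that I never need to store more than one bit per current-bag element simultaneously. The bookkeeping that the slot-allocation map, all counters, and pointers are $O(\log n)$ rather than being charged to the $k$-bit tape is what keeps the machine inside $\NLk$; verifying that this separation holds for each of the four problems, and that \MaxTwoSAT's counter and the checking of a length-$k$ arithmetic comparison stay on the logspace tape, is the part I would carry out most carefully.
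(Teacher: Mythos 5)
Your proposal is correct and follows essentially the same route as the paper: simulate the standard dynamic program along the nice path-decomposition, keeping only the assignment restricted to the current bag on the $k$-bit tape while all pointers and counters live on the logspace tape. The one phrase to repair is ``maintaining a $\pw$-bounded mapping from variables to slots on the logspace tape'' --- such a map costs $\Theta(\pw\log n)$ bits if stored explicitly, so it must instead be implicit (e.g.\ slot $=$ position of the variable in the current bag's listing on the read-only input tape, recomputed by scanning); the paper sidesteps both this and your per-Introduce-node bookkeeping by having the certificate supply a full partial assignment $f_i\colon X_i\to\{0,1\}$ for every bag and checking consistency of consecutive $f_i$'s.
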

\begin{proof}
We show that \SAT parameterized by path-width is in $\NLk$.
For the other problems, we can use similar proofs, so we omit them.

Let $(X_1, \ldots, X_d)$ be the list of bags of the nice path-decomposition from the root to the leaf.
As a certificate, we use a list of partial assignments $f_i:X_i\rightarrow\{0,1\}$.
Starting from the root bag $X_1$, the machine $M$ handles each bag one by one as follows.
Let $X_i$ be the current bag.
By storing the current partial assignment $f_i$ to the $k$-bit tape, we can check that there are no
inconsistencies between two assignments $f_i$ and $f_{i-1}$.
From the definition of path-decomposition, if $f_i$ and $f_{i-1}$ are consistent for all $i$, all the partial
assignments are consistent.
If $X_i$ is an Introduce($C$) bag, we check that the partial assignment satisfies the clause $C$.
Since each clause $C$ has an Introduce($C$) node, this implies that the assignment given as the certificate satisfies
all the clauses.
\end{proof}

\begin{lemma}\label{lem:pw:nlk2}
\TSP, \problem{Optimal Linear Arrangement}, \problem{Directed Feedback Arc
  Set} parameterized by the number of vertices of the input graph, and \SetCover parameterized by the number of elements 
  are in $\NLk$.
\end{lemma}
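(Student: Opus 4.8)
The plan is to exhibit, for each of the four problems, a verifier of exactly the shape required by the definition of $\NLk$. In every case the certificate encodes either a linear ordering of the $n$ vertices (for \TSP, \problem{Directed Feedback Arc Set}, and \problem{Optimal Linear Arrangement}) or a selection among the input sets (for \SetCover, where $n$ is the number of elements), and the verifier reads this certificate in a single left-to-right pass, consistent with the read-once certificate tape. The key design decision is that the $k$-bit tape is used only to store a \emph{single} characteristic vector of exactly $n$ bits: a ``placed/visited'' mask recording which vertices have already appeared in the prefix of the ordering read so far, or a ``covered'' mask recording which elements of the universe are covered so far. Since this is one $n$-bit array rather than two, the $k$-bit tape holds exactly $\kappa(x)=n$ bits, and all remaining bookkeeping—the current vertex, the current position, running counters, and the address used to index the mask—consists of $O(\log n)$ integers bounded by a polynomial in $n$, which live on the logspace tape.

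For \TSP I would read the claimed Hamiltonian cycle $v_1,\dots,v_n$ vertex by vertex, keeping the previous vertex (and a saved copy of $v_1$) in logspace so as to verify on the input tape that each consecutive pair, and the wrap-around pair, is an arc, while using the $n$-bit mask to reject any repeated vertex and to confirm at the end that all $n$ vertices occur. For \problem{Directed Feedback Arc Set} and \problem{Optimal Linear Arrangement} I would exploit the standard ordering characterizations: a digraph has a feedback arc set of size at most $k$ iff some vertex ordering induces at most $k$ back-arcs, and the linear-arrangement cost of an ordering equals $\sum_i c_i$, where $c_i$ counts the edges crossing the cut after position $i$. Both quantities can be maintained incrementally during the single pass: when placing $v_i$, I scan its (out-)neighbours on the input tape and, using the placed-mask to test which of them already lie in the prefix, update a running counter that stays bounded by a polynomial in $n$ and hence fits in logspace; at the end I compare the counter against $k$. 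For \SetCover the pass reads the indices of the chosen sets, OR-ing the characteristic vector of each chosen set into the covered-mask, counting the chosen sets in logspace to enforce the budget, and finally checking that all $n$ bits are set.

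The main obstacle is not the high-level design but making the single-head $k$-bit tape behave like a random-access $n$-bit array under the ``exactly $n$ bits'' restriction, which is precisely the difficulty the definition flags. Every update above must test or flip the bit indexed by a vertex or element determined from the input, yet the tape head moves only sequentially; I would therefore keep the target index on the logspace tape and move the head by decrementing that index, so that each random access costs $O(n)$ time and $O(\log n)$ extra space but never allocates additional cells on the $k$-bit tape. I also need to check, for each problem, that one $n$-bit mask genuinely suffices—in particular that the placed-mask can simultaneously serve as the permutation-validity test and as the ``already in the prefix'' predicate driving the cut and back-arc counters—so that the $k$-bit usage remains $n$ rather than degenerating to $O(n)$. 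Once these two points are verified, each verifier runs in polynomial time using $O(\log n)$ space on the logspace tape and exactly $n$ bits on the $k$-bit tape, placing all four problems in $\NLk$.
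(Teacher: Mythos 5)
Your proposal is correct and follows essentially the same route as the paper: the paper's proof (given explicitly only for \TSP) uses exactly your construction---an ordering certificate read in one pass, an $n$-bit visited mask on the $k$-bit tape, and the first/previous vertex plus counters in logspace---and declares the other three problems ``similar.'' Your write-up simply fills in those omitted cases with the same single-mask idea, and your handling of the sequential-head issue matches the intent of the paper's two-tape definition.
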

\begin{proof}
We show that \TSP parameterized by the number of vertices is in $\NLk$.
For the other problems, we can use similar proofs, so we omit them.
\TSP is the following problem: given a directed graph $G=(V,E)$ answer whether there exists a cycle that passes each
vertex exactly once.

As a certificate, we use an ordering of vertices on the cycle.
Then the machine reads each vertex in the ordering one by one.
We can check the ordering is actually a cycle by putting the first and the
last vertex on the logspace tape.
Since the certificate tape is read-once, we cannot check whether each vertex appears exactly once by only using logspace
tape.
When the machine reads a vertex $i$ from the certificate, it writes a symbol 1 on the $i$-th position of the $k$-bit
tape.
If the symbol in the $i$-th position is already 1, the certificate contains the vertex $i$ multiple times.
Finally, by checking all the symbols in the $k$-bit tape is 1, we can confirm that each vertex appears exactly once in
the ordering.
\end{proof}

Now, we define \emph{logspace parameter-preserving reduction} and introduce $\NLk$-complete problems.

\begin{definition}[Reducibility]
A parameterized problem $A=(L,\kappa)$ is \emph{logspace parameter-preserving reducible} to a parameterized problem
$B=(L',\kappa')$, denoted by $A\nlkreduce B$, if there exists a logspace computable function $\phi:\{0,1\}^*\rightarrow
\{0,1\}^*$ such that
\begin{itemize}
  \item $x\in L\iff \phi(x)\in L'$, and
  \item $\kappa'(\phi(x))\leq \kappa(x)+O(\log |x|)$.
\end{itemize}
\end{definition}

Note that in the standard parameterized reduction, the computation can take $f(\kappa(x))\poly(|x|)$ time and the
parameter $\kappa'(\phi(x))$ of the reduced instance can be increased to any function of the original parameter $\kappa(x)$.
However, in our reduction, we allow only a logspace computation and an additive increase by $O(\log |x|)$ of the
parameter.

\begin{proposition}
If $A\nlkreduce B$ and $B\in \NLk$, then $A\in\NLk$.
\end{proposition}
The proof of the proposition is an easy extension of the case for $\NL$ (see the text
book by Arora and Barak~\cite[Chap.4.3.]{DBLP:books/daglib/0023084}), so we omit it here.

\begin{definition}[$\NLk$-complete]
A parameterized problem $A$ is called \emph{$\NLk$-hard} if for any $B\in \NLk$, we have $B\nlkreduce
A$.
Moreover, if $A\in \NLk$, $A$ is called \emph{$\NLk$-complete}.
\end{definition}

Since there are at most $2^{k+O(\log |x|)}\poly(|x|)=O^*(2^k)$ states, any
problem in $\NLk$ can be solved in $O^*(2^k)$ time by dynamic programming.
The following proposition follows from the definitions.
\begin{proposition}
  Any problem in $\NLk$ can be solved in $O^*(2^k)$ time.
  If one of the $\NLk$-hard problem can be solved in $O^*(c^k)$ time, then any problem in $\NLk$ can also be
  solved in $O^*(c^k)$ time.
\end{proposition}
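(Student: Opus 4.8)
The plan is to prove the two assertions separately, both directly from the definitions, since each is essentially a restatement of a standard fact about $\NL$ together with the composition property of $\nlkreduce$. For the first assertion, I would argue via the configuration graph of the verifying machine $M$. Fix a problem $(L,\kappa)\in\NLk$ and an input $x$. A \emph{configuration} of $M$ records the contents of the $k$-bit tape, the contents of the logspace tape, the head positions on the input tape and on both working tapes, and the current control state; because the certificate tape is read-once, I would not store the certificate inside a configuration, but instead model each step that reads the next certificate bit as a nondeterministic branch into two successor configurations (one per bit), exactly as in the usual certificate characterization of $\NL$. Since the $k$-bit tape holds at most $\kappa(x)\le k$ bits, it contributes at most $2^k$ possibilities, while the logspace tape, the head positions, and the control state together contribute only $\poly(|x|)$ possibilities. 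Hence the configuration graph has $2^k\cdot\poly(|x|)=O^*(2^k)$ nodes, and since the transition function is finite, each node has constant out-degree, so the number of edges is also $O^*(2^k)$.

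With this graph in hand, I would observe that $x\in L$ holds if and only if some accepting configuration is reachable from the initial configuration: a valid certificate $y$ with $M(x,y)=1$ corresponds to a directed path that follows, at each certificate-reading step, the branch labelled by the appropriate bit of $y$, and conversely any such path (which has polynomial length because $M$ runs in polynomial time) reconstructs a certificate. Deciding reachability in a graph with $O^*(2^k)$ nodes and edges takes time linear in its size, so the whole procedure runs in $O^*(2^k)$ time, establishing the first assertion.

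For the second assertion, I would use the reductions directly. Suppose some $\NLk$-hard problem $A=(L_A,\kappa_A)$ can be solved in $O^*(c^k)$ time, and let $B=(L',\kappa')\in\NLk$ be arbitrary. By $\NLk$-hardness we have $B\nlkreduce A$, so there is a logspace-computable $\phi$ with $x\in L'\iff\phi(x)\in L_A$ and $\kappa_A(\phi(x))\le\kappa'(x)+O(\log|x|)$. To decide $B$ on input $x$, I would first compute $\phi(x)$, which takes polynomial time and yields an output of size $\poly(|x|)$, and then run the assumed algorithm for $A$ on $\phi(x)$. Its running time is $O^*\bigl(c^{\kappa_A(\phi(x))}\bigr)\le O^*\bigl(c^{\kappa'(x)+O(\log|x|)}\bigr)=O^*\bigl(c^{\kappa'(x)}\cdot c^{O(\log|x|)}\bigr)$, and since $c^{O(\log|x|)}=\poly(|x|)$ is absorbed into the $O^*$ notation, this is $O^*(c^{\kappa'(x)})$, as required.

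I do not expect a genuine obstacle, as this is little more than the standard $\NL$ configuration-graph argument combined with the closure of the reductions under composition. The only points needing a little care are the bookkeeping that keeps the configuration graph at $O^*(2^k)$ nodes---which relies on the definition forcing the $k$-bit tape to use \emph{at most} $\kappa(x)$ (not $O(\kappa(x))$) bits---and the treatment of the read-once certificate tape as nondeterministic branching rather than as part of the stored state.
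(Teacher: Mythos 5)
Your proposal is correct and follows essentially the same route as the paper: the paper justifies the first claim by noting there are at most $2^{k+O(\log|x|)}\poly(|x|)=O^*(2^k)$ machine states and invoking dynamic programming over them (your configuration-graph reachability argument, with the read-once certificate handled as branching, is exactly this), and it dismisses the second claim as following from the definitions, which is precisely your composition of the logspace parameter-preserving reduction with the assumed algorithm, absorbing $c^{O(\log|x|)}$ into the polynomial factor. Your write-up simply makes explicit the bookkeeping the paper leaves implicit.
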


Now, we show that the problems in Lemma~\ref{lem:pw:nlk1} are $\NLk$-complete.


\begin{theorem}
SAT parameterized by path-width is $\NLk$-complete.
\end{theorem}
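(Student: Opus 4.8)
Since Lemma~\ref{lem:pw:nlk1} already shows that \SAT parameterized by path-width lies in $\NLk$, the only thing left is $\NLk$-hardness. The plan is to fix an arbitrary problem $(L,\kappa)\in\NLk$ together with its verifier $M$, and to build a logspace-computable map $\phi$ that sends an input $x$ to a CNF together with a path-decomposition so that $x\in L$ iff $\phi(x)$ is satisfiable and the width of the produced decomposition is $\kappa(x)+O(\log|x|)$. Concretely, I would use a Cook--Levin style encoding of the computation of $M$ on $x$ over its polynomial running time $T=\poly(|x|)$: for each time step $t\in\{0,\dots,T\}$ introduce variables describing the \emph{configuration} of $M$ at time $t$, namely the finite control state ($O(1)$ bits), the positions of all heads ($O(\log|x|)$ bits each), the contents of the logspace tape ($O(\log|x|)$ bits), and one variable for each of the $\kappa(x)$ cells of the $k$-bit tape. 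Transition clauses then enforce that the configuration at time $t+1$ follows from that at time $t$ according to $M$'s transition function, and a final clause forces an accepting state.

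The two features of the $\NLk$ model that make the width come out to $\kappa(x)+O(\log|x|)$ rather than larger are the read-once certificate and the read-only input. Because the certificate tape is read-once, every certificate cell is read at most once during the computation; hence asking whether \emph{some} certificate $y$ makes $M$ accept is equivalent to letting the verifier \emph{guess} a fresh bit at each certificate read, with no consistency required across reads. I would therefore model each certificate read by an independent auxiliary variable local to a single time step, so the certificate contributes nothing to the width. The input $x$ is fixed by $\phi$, so reading it amounts to a \emph{lookup}: given the $O(\log|x|)$-bit input-head position, a constraint forces the read bit to equal the hard-wired constant $x_h$; this is an arithmetic constraint on $O(\log|x|)$ variables, expressible by $\poly(|x|)$ clauses as noted in Section~\ref{sec:pre}. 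The remaining transition logic is \emph{local} in the Turing-machine sense: only the state, the head positions, the single cell under the $k$-bit head (read and rewritten), the read input bit, and the guessed certificate bit are coupled, while every other $k$-bit cell is simply copied forward.

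The path-decomposition runs along the time axis, so a cut between ``before time $t$'' and ``after time $t$'' needs to carry only the interface at time $t$: the $\kappa(x)$ cell variables plus $O(\log|x|)$ bits of state, head positions and logspace contents. The naive realization, keeping the configurations at $t$ and $t+1$ simultaneously in one bag, would give width $2\kappa(x)+O(\log|x|)$, which is too much. To avoid this factor-two blow-up I would apply the ``forget and remember'' idea from the overview and simulate the copy step \emph{cell by cell}: when passing from time $t$ to $t+1$, introduce the variable for cell $(c,t+1)$, add the (local) clause tying it to $(c,t)$ and to whether the head sits on $c$, and then immediately forget $(c,t)$. At any moment the bag holds exactly one time-slice of each $k$-bit cell (with at most one cell temporarily duplicated), so every bag has size $\kappa(x)+O(\log|x|)$; a short read-scan and write-scan in this style suffice to compute the cell under the head and to apply the transition. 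Every transition and lookup clause involves only the $O(\log|x|)$ head and state variables together with $O(1)$ cell variables that are present in the bag while that cell is processed, so each clause fits into a bag of the required width.

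Finally, $\phi$ is logspace computable because the whole construction is uniform and local: the clauses are emitted by a single scan over time steps, cell indices, and the transition table of the fixed machine $M$, and the bags of the path-decomposition are produced in the same order. I expect the main obstacle to be exactly the width accounting in the copy step --- guaranteeing that the two consecutive configurations never coexist in a bag, so that the additive slack stays $O(\log|x|)$ on top of $\kappa(x)$ rather than incurring the multiplicative factor of two --- and, secondarily, arranging that $\kappa(x)$ cells can be allocated within the logspace budget of $\phi$.
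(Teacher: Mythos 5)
Your proposal follows essentially the same route as the paper: membership via Lemma~\ref{lem:pw:nlk1}, and hardness via a Cook--Levin tableau in which head positions, the finite control, and the logspace tape contents are encoded in $O(\log|x|)$ binary variables per time step, each of the $\kappa(x)$ cells of the $k$-bit tape gets one variable per time step, the input bit under the head is hard-wired as a lookup, and the factor-two blow-up is avoided by interleaving introduce/forget cell by cell along the time axis so that at most one cell is duplicated in a bag --- this is exactly the paper's path-decomposition (introduce $T^K_{i+1,j}$, forget $T^K_{i,j}$ for $j=1,\dots,k$). The one point where your description is not quite right is the certificate tape: ``read-once'' means the head never moves backward ($d^C\geq 0$), not that each cell is consulted at most once, so the machine may dwell on a cell for several steps and re-read it. Guessing ``a fresh bit at each read with no consistency required'' would then let the simulated run see two different values in the same certificate cell, which can make the CNF satisfiable for some $x\notin L$. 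The paper fixes this with one certificate variable $T^C_i$ per time step and the clause $C\rightarrow T^C_i=T^C_{i+1}$ whenever $d^C=0$; this constraint only couples adjacent time slices, so it costs nothing in width. With that local repair your argument matches the paper's proof.
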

\begin{proof}
SAT parameterized by path-width is in $\NLk$.
So it is sufficient to show that any parameterized problem $A=(L,\kappa)$ in $\NLk$ can be reduced to SAT
parameterized by path-width.
Let $M$ be a Turing machine that accepts $L$, $Q$ be the set of (internal) states
of $M$, and $t, s: \mathbb{N}\rightarrow\mathbb{N}$
be the polynomial time bound and logarithmic space bound of $M$, respectively.
We reduce an instance $x$ of $A$ with a parameter $k=\kappa(x)$ to \SAT as follows.

For each step $i\in [t(|x|)]$, we create the following variables:
\begin{itemize}
  \item $Q_{i,q}$ for each $q\in Q$, which indicates that $M$ is in state $q$,
  \item $H^I_{i,j}$ for each $j\in [\ceil{\log|x|}]$, which indicates the position of the input tape head in binary,
  \item $H^K_{i,j}$ for each $j\in [\ceil{\log k}]$, which indicates the position of the $k$-bit tape head in binary,
  \item $H^L_{i,j}$ for each $j\in [\ceil{\log r(|x|)}]$, which indicates the position of the logspace tape head in binary,
  \item $T^K_{i,h}$ for each $h\in [k]'$, which indicates the symbol written in the $h$-th cell of the $k$-bit tape,
  \item $T^L_{i,h}$ for each $h\in [s(|x|)]'$, which indicates the symbol written in the $h$-th cell of the logspace
  tape, and
  \item $T^C_i$, which represents the symbol in the cell of the certificate tape.
\end{itemize}

Now, we create clauses.
Let $q_s\in Q$ be the initial state and $q_t\in Q$ be the accepting state.
First, we create the following clauses (consisting of single literals) to express the initial and the final configuration:
\begin{itemize}
  \item $Q_{1,q_s}$ (the machine is in the state $q_s$),
  \item $\overline{H^I_{1,j}}$ for each $j\in [\ceil{\log|x|}]$ (the input tape head is at the position 0),
  \item $\overline{H^K_{1,j}}$ for each $j\in [\ceil{\log k}]$ (the $k$-bit tape head is at the position 0),
  \item $\overline{H^L_{1,j}}$ for each $j\in [\ceil{\log s(|x|)}]$ (the logspace tape head is at the position 0),
  \item $\overline{T^K_{1,h}}$ for each $h\in [k]'$ (each cell of the $k$-bit tape has symbol 0),
  \item $\overline{T^L_{1,h}}$ for each $h\in [r(|x|)]'$ (each cell of the logspace tape has symbol 0), and
  \item $Q_{t(|x|),q_t}$ (the machine must finish in the state $q_t$).
\end{itemize}

Then, for each step $i\in [t(|x|)]$, we create clauses to express transitions.
The machine can take only one state, so we create a clause $\overline{Q_{i,q}}\vee\overline{Q_{i,q'}}$ for each $q\neq
q'$.
If a cell changes, the head must be there (or equivalently, cells not pointed by the head must remain unchanged), so
we create the following clauses:
\begin{itemize}
  \item $T^K_{i,h^K}\neq T^K_{i+1,h^K}\rightarrow (H^K_{i,*})_2=h^K$ for each $h^K\in [k]'$, and
  \item $T^L_{i,h^L}\neq T^L_{i+1,h^L}\rightarrow (H^L_{i,*})_2=h^L$ for each $h^L\in [s(|x|)]'$.
\end{itemize}
Let $\delta: (q,c^I,c^K,c^L,c^C)\mapsto(q',c'^K,c'^L,d^I,d^K,d^L,d^C)$ be the transition function, which indicates
that if the machine is in the state $q$, the symbol in the input tape is $c^I$, the symbol in the $k$-bit tape is $c^K$,
the symbol in the logspace tape is $c^L$, and the symbol in the certificate tape is $c^C$, then the machine changes the
state to $q'$, write $c'^K$ to the cell of the $k$-bit tape, write $c'^L$ to the cell of the logspace tape, move the
input tape head by $d^I$, move the $k$-bit tape head by $d^K$, move the logspace tape head by $d^L$, and move the
certificate tape head by $d^C$.
Note that since the certificate tape is read-once, $d^C\geq 0$.
For each $h^I\in [|x|]'$, $h^K\in [k]'$, $h^L\in [s(|x|)]'$, and transition
$(q,c^I,c^K,c^L,c^C)\mapsto(q',c'^K,c'^L,d^I,d^K,d^L,d^C)$, we create clauses as follows.
If a symbol in the $h^I$-th position of the input tape is not $c^I$, this transition never occurs.
Otherwise, let $C$ be the constraint $Q_{i,q}\; \wedge\; (H^I_{i,*})_2=h^I\;\wedge\;
(H^K_{i,*})_2=h^K\;\wedge\; (H^L_{i,*})_2=h^L\;\wedge\; T^K_{i,h^K}=c^K\;\wedge\; T^L_{i,h^L}=c^L\;\wedge\; T^C_i=c^C$.
Then, we create the following clauses:
\begin{itemize}
  \item $C\rightarrow Q_{i+1,q'}$ (the machine changes the state to $q'$),
  \item $C\rightarrow T^K_{i+1,h^K}=c'^K$ ($c'^K$ is written in the cell of the $k$-bit tape),
  \item $C\rightarrow T^L_{i+1,h^L}=c'^L$ ($c'^L$ is written in the cell of the the logspace tape),
  \item $C\rightarrow (H^I_{i+1,*})_2=h^I+d^I$ (the input tape head moves by $d^I$),
  \item $C\rightarrow (H^K_{i+1,*})_2=h^K+d^K$ (the $k$-bit tape head moves by $d^K$),
  \item $C\rightarrow (H^L_{i+1,*})_2=h^L+d^L$ (the logspace tape head moves by $d^L$), and
  \item $C\rightarrow T^C_i=T^C_{i+1}$ if $d^C=0$ (if the certificate tape head does not move, then the symbol in the certificate tape does not change).
\end{itemize}

It is not difficult to check that the reduction can be done in logspace and the obtained CNF is satisfiable if and only
if there is a certificate such that the machine finishes in the accepting state.
Finally, we show that the obtained CNF has path-width $k+O(\log |x|)$.

For a step $i$, let $T^K_i=\{T^K_{i,h}\mid h\in[k]'\}$ and $X_i$ be the set of other variables.
The primal graph of the obtained CNF has the following properties:
\begin{itemize}
  \item $N[X_i]\subseteq T^K_{i-1}\cup X_{i-1}\cup T^K_i\cup X_i\cup T^K_{i+1}\cup X_{i+1}$,
  \item $N(T^K_{i,j})\subseteq \{T^K_{i-1,j},T^K_{i+1,j}\}\cup X_{i-1}\cup X_i\cup X_{i+1}$.
\end{itemize}
We can construct a path-decomposition as follows:
starting from a bag $T^K_1\cup X_1$ and $i=1$, introduce $X_{i+1}$, introduce $T^K_{i+1,1}$, forget $T^K_{i,1}$,
\ldots, introduce $T^K_{i+1,k}$, forget $T^K_{i,k}$, forget $X_i$ (the current bag consists of $T^K_{i+1}\cup X_{i+1}$),
and then increase $i$.
Since the size of $X_i$ is $O(\log |x|)$ and the size of $T^K_i$ is exactly $k$, the width of the obtained
path-decomposition is $k+O(\log |x|)$.
\end{proof}

\begin{theorem}
\ThreeSAT parameterized by path-width is $\NLk$-complete.
\end{theorem}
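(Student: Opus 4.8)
The goal is to show that \ThreeSAT parameterized by path-width is \NLk-complete.

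\textbf{Plan.} The statement has two parts: membership in \NLk and \NLk-hardness. For membership, I would invoke Lemma~\ref{lem:pw:nlk1}, which already states that \ThreeSAT parameterized by path-width is in \NLk. So the only real work is establishing \NLk-hardness, i.e. showing that every problem in \NLk reduces to \ThreeSAT parameterized by path-width under $\nlkreduce$.

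\textbf{Reduction strategy.} The natural route is to chain reductions: we have just shown that \SAT parameterized by path-width is \NLk-hard (the preceding theorem), so it suffices to give a logspace parameter-preserving reduction from \SAT parameterized by path-width to \ThreeSAT parameterized by path-width. The obvious candidate is the standard clause-splitting reduction already used in Section~\ref{sec:tw:sat_3sat}: replace each long clause $(x_1\vee\cdots\vee x_k)$ with the chain $(x_1\vee x_2\vee y_1),(\overline{y_1}\vee x_3\vee y_2),\ldots,(\overline{y_{k-3}}\vee x_{k-1}\vee x_k)$. First I would observe that this transformation is computable in logspace, since each clause is processed independently and only a local index into the clause and a running counter for the fresh variables $y_j$ are needed, both of which fit in $O(\log|x|)$ space. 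Second, I would note satisfiability is preserved, which is the classical correctness of this reduction.

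\textbf{The parameter bound.} The crux is that the reduction must only increase the width additively by $O(\log|x|)$, but here we need the path-width rather than tree-width. The argument in Section~\ref{sec:tw:sat_3sat} bounds the tree-width increase by $+2$ using Lemma~\ref{lem:tw:eliminate_layer}, eliminating the chain variables $Y=\{y_1,\ldots,y_{k-3}\}$ of each split clause as layers $S_j=\{y_j\}$. The main thing to check is that the same elimination respects path-width. Since the given decomposition of \SAT is a path-decomposition and each split clause's variables all live in a single bag (they form a clique in the primal graph), I would introduce the fresh $y_j$ variables into that bag's neighborhood along the path without branching, so the result is again a path-decomposition. Concretely, applying the layered elimination of Lemma~\ref{lem:tw:eliminate_layer} to a path-decomposition produces a path-decomposition, and the degree bound $k+1\leq\pw+2$ carries over exactly as before since $\pw\geq k-1$ (the clause variables form a clique). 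Hence $\pw(\mathcal{C}')\leq\pw+2=\pw+O(\log|x|)$, which satisfies the parameter-preservation requirement.

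\textbf{Main obstacle.} The delicate point is verifying that path-width—not just tree-width—is preserved, since the lemmas in Section~\ref{sec:pre} are phrased for tree-width. I expect the hardest part to be confirming that inserting the chain gadgets keeps the decomposition tree a path rather than forcing branching; this is plausible because each split clause is processed entirely within a single original bag, but it requires spelling out how the Introduce/Forget operations for the $y_j$ are sequenced along the path so that no two chains interfere and the bag size grows by only $O(1)$ at a time. Once that is established, combining the logspace computability, correctness, and the path-width bound gives $\SAT\nlkreduce\ThreeSAT$, and transitivity of $\nlkreduce$ together with the preceding theorem yields \NLk-hardness, completing the proof.
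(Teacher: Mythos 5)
Your proposal matches the paper's proof: reduce from \SAT parameterized by path-width via the standard clause-splitting reduction, and bound the path-width by explicitly sequencing Introduce/Forget operations for the chain variables $y_j$ within the original Introduce($C$) bag (introduce $y_1$, introduce $y_2$, forget $y_1$, \ldots), giving width $\pw+O(1)$. One caution: your intermediate claim that applying Lemma~\ref{lem:tw:eliminate_layer} to a path-decomposition yields a path-decomposition is not literally true as those elimination lemmas are proved (they attach a pendant bag, producing a tree), but the explicit introduce/forget sequence you describe at the end --- which is exactly what the paper does --- makes that detour unnecessary.
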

\begin{proof}
We prove the theorem by a reduction from \SAT parameterized by path-width.
The reduction is completely the same as the standard reduction (see Section~\ref{sec:tw:sat_3sat}).
Starting from an empty bag and the leaf node $i$ of the given nice path-decomposition of width $\pw$, we can construct a
path-decomposition of the reduced instance as follows.
If $i$ is an Introduce($C$) node of length more than three, let $\{y_1,\ldots,y_k\}$ be the variables
created to replace the clause $C$.
Then, we introduce $y_1$, introduce $y_2$, forget $y_1$, introduce $y_3$, forget $y_2$, \ldots, introduce
$y_k$, forget $y_{k-1}$, and forget $y_k$.
If $i$ is an Introduce($x$) node, we introduce $x$, and if $i$ is a Forget($x$) node, we forget $x$.
Finally, we change $i$ to its parent and repeat the process until reaching to the root.
The width of this path-decomposition is $\pw+O(1)$.
\end{proof}

\begin{theorem}
\IS parameterized by path-width is $\NLk$-complete.
\end{theorem}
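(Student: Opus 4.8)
The plan is to prove membership and hardness separately. \IS parameterized by path-width already lies in $\NLk$ by Lemma~\ref{lem:pw:nlk1}, so only $\NLk$-hardness remains, and since the previous theorem shows \ThreeSAT parameterized by path-width is $\NLk$-complete, it suffices to give a reduction $\ThreeSAT\nlkreduce\IS$: a logspace-computable map that raises the path-width by at most an additive $O(\log|x|)$. My starting point is the graph $(G,k)$ that Section~\ref{sec:tw:3sat_is} builds from a nice decomposition of a \ThreeSAT instance, run now on a nice \emph{path}-decomposition $(X_1,\ldots,X_d)$ of width $\pw$ instead of on a tree-decomposition. Correctness---that $G$ has an independent set of size $k$ iff $\C$ is satisfiable---is inherited verbatim from Section~\ref{sec:tw:3sat_is}, so the only new obligations are logspace computability and the bound $\pw(G)\le\pw+O(\log\pw)$, which suffices because $\pw\le|x|$.

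Logspace computability I expect to be routine: every gadget of $G$ is determined locally by a single node, the variables it holds, and its unique parent. The two vertices of each variable gadget, the clique of each clause gadget, and the $O(\log\pw)$ vertices per layer of each child and parent counting gadget can be enumerated with a constant number of pointers into the path-decomposition and into $\C$, plus counters up to $2^{O(\log\pw)}=\poly(\pw)$ for listing the clause gadgets that simulate each arithmetic constraint $(s_{a+1,*})_2=(s_{a,*})_2+(y_a)_2$ and the final ``$\le|P_i|$'' test. All of this fits in $O(\log|x|)$ work space.

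The width bound is where the real work lies, because path-width is not bounded by tree-width, so I cannot simply invoke the elimination argument of Section~\ref{sec:tw:3sat_is}; I would instead build an explicit path-decomposition. The input path-decomposition fixes a linear order on the nodes, and with no Join nodes each counting gadget links only consecutive nodes, so I would sweep $X_1,\ldots,X_d$ in that order and, at each node, introduce and later forget the gadget vertices belonging to it. Each counting gadget I would traverse one $O(\log\pw)$-sized layer at a time---forgetting layer $a-1$ before introducing layer $a+1$---so that at most two layers are ever simultaneously present. The essential point, and the reason the bound is $\pw+O(\log\pw)$ rather than $2\pw+O(\log\pw)$, is the forget-and-remember idea of Section~\ref{sec:tw:3sat_is}: a bag must never contain both $x_i$ and $\overline{x_i}$. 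Because $x_i$ and $\overline{x_i}$ are joined by a single matching edge and all consistency with neighbouring nodes is routed through the counting gadgets, I can schedule the $x_i$ literals to be forgotten before the corresponding $\overline{x_i}$ literals are introduced---the path analogue of applying Lemma~\ref{lem:tw:eliminate_matching} to eliminate $Y_i$ while excluding $\overline{Y_i}$ from the degree count.

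I expect the main obstacle to be producing this linear schedule while keeping the $2\pw$ blowup from reappearing. Committing to a genuine path-decomposition removes the freedom the tree-width elimination enjoyed, and two places need care: the parent counting gadget $T_i$ touches vertices $\overline{x_p}$ of the parent node, so it must be processed during the parent's turn rather than the child's, and the $x_i$/$\overline{x_i}$ literals must be kept from being simultaneously live across the node boundary. The locality recorded in Section~\ref{sec:tw:3sat_is}---in particular $N(V_i)\subseteq Y_p\cup T_{i,d+1}$ for a node $i$ with parent $p$---together with the absence of Join nodes, is exactly what I expect to make a valid schedule of width $\pw+O(\log\pw)$ possible; verifying that no bag ever holds a full second copy of a variable gadget is the key check.
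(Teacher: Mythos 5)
Your plan coincides with the paper's proof: the same reduction from \ThreeSAT parameterized by path-width via the Section~\ref{sec:tw:3sat_is} construction, followed by an explicit path-decomposition that sweeps the input path-decomposition node by node, walks each counting gadget keeping only two consecutive $O(\log\pw)$-sized layers alive, processes the parent counting gadget for $c$ during the parent's turn, and interleaves the two halves of each variable gadget so that a full second copy of $Y_i$ never accumulates in a bag. One detail must be flipped: you cannot forget $x_i$ before introducing $\overline{x_i}$, since the edge $x_i\overline{x_i}$ forces them to share a bag; the paper instead introduces $\overline{x_i}$ and then forgets $x_i$ \emph{one variable at a time}, which keeps the overlap to a single extra vertex and yields the same $\pw+O(\log\pw)$ bound.
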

\begin{proof}
We prove the theorem by a reduction from \ThreeSAT parameterized by path-width.
The reduction is completely the same as that for the tree-width case (Section~\ref{sec:tw:3sat_is}), so we only need to
bound the path-width of the obtained graph.
Starting from an empty bag and the leaf node $i$ of the given nice path-decomposition of width $\pw$, we can construct a
path-decomposition of the reduced instance as follows.

If $i$ is not the leaf, let $c$ be the child of $i$.
For each variable $x\in X_i\cap X_c$, we introduce $x_i$ and forget $\overline{x_c}$.
If $i$ is an Introduce($x$) node, we introduce $x_i$, if $i$ is a Forget($x$) node, we forget $\overline{x_c}$, and if
$i$ is an Introduce($C$) node, we introduce the corresponding clause gadget.

Then, we process the child counting gadget for $i$ as follows.
First, we introduce the first layer $S_{i,1}$.
Then, starting from $a=1$, we repeat the following process by incrementing $a$:
(1) introduce the next layer $S_{i,a+1}$, (2) for each clause gadget $C$ connecting $S_{i,a}$ and $S_{i,a+1}$, introduce
$C$ and forget $C$ one by one, (3) forget the current layer $S_{i,a}$.
Note that the last layer of the counting gadget is remained in the bag.

Next, for each variable $x\in X_i$, we introduce $\overline{x_i}$ and forget $x_i$ one by one.
If $i$ is an Introduce($C$) node, we forget the corresponding clause gadget.
We process the parent counting gadget for $c$ in the same way as we did for the child counting gadget.
Then, we process the last layers of the child and the parent counting gadget for $c$.
For each clause gadget $C$ connecting the last layers of the child and the parent counting gadget, we introduce
$C$ and forget $C$ one by one, and then we forget these two layers .
Finally, we change $i$ to its parent and repeat the process until reaching to the root.
The width of this path-decomposition is $\pw+O(\log \pw)$.
\end{proof}

\begin{theorem}
\MaxTwoSAT parameterized by path-width is $\NLk$-complete.
\end{theorem}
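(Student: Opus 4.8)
The plan is to establish membership and hardness separately, exactly mirroring the pattern used for \SAT, \ThreeSAT, and \IS earlier in this section. Membership is immediate: Lemma~\ref{lem:pw:nlk1} already states that \MaxTwoSAT parameterized by path-width lies in $\NLk$, so it only remains to prove $\NLk$-hardness. For this I would reduce from \IS parameterized by path-width, which was established $\NLk$-complete just above, and I would reuse verbatim the reduction of Section~\ref{sec:tw:is_max2sat}: given an instance $(G=(V,E),k)$, create a variable $x_v$ together with a unit clause $(x_v)$ for each vertex $v$, take $|V|+1$ copies of $(\overline{x_u}\vee\overline{x_v})$ for each edge $uv$, and set $k'=|E|(|V|+1)+k$. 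The equivalence—that $G$ has an independent set of size at least $k$ if and only if at least $k'$ clauses are satisfiable—is precisely the correctness argument already given in Section~\ref{sec:tw:is_max2sat}, so I would simply cite it rather than reprove it.

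The crucial observation for the parameter bound is that the primal graph of the constructed $2$-CNF is literally $G$ under the bijection $v\mapsto x_v$: the unit clauses contribute no edges, and the duplicated two-literal clauses all sit on the pairs $\{x_u,x_v\}$ arising from edges of $G$, so duplicating them does not alter the primal graph at all. Consequently the path-decomposition of $G$ supplied with the \IS instance becomes, after relabeling each vertex $v$ in every bag to $x_v$, a path-decomposition of the \MaxTwoSAT instance of \emph{exactly} the same width. Hence \pw is preserved with no additive loss, which is well within the $+O(\log|x|)$ budget allowed by $\nlkreduce$.

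It then remains to verify that the whole map is logspace computable, since $\nlkreduce$ forbids anything stronger than logspace. I would argue this directly: enumerating the vertices to emit the unit clauses, enumerating the edges together with a copy counter running up to $|V|+1$ to emit the negated clauses, computing $k'=|E|(|V|+1)+k$, and copying the input decomposition with each vertex label rewritten, all use only $O(\log|x|)$ bits of work space. Putting these pieces together shows that \IS $\nlkreduce$ \MaxTwoSAT, where both problems are parameterized by path-width, and combined with membership this gives $\NLk$-completeness.

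I expect no genuine obstacle here, since the underlying reduction and its correctness are already in hand. The only two points that need care are (i) emphasizing that path-width, and not merely tree-width, is preserved—which holds because the primal graph is \emph{identically} $G$ rather than merely a graph of the same width, so the given decomposition transfers directly—and (ii) checking that the construction, including carrying over the supplied decomposition, fits in logspace rather than just polynomial time, as demanded by the more restrictive reducibility $\nlkreduce$ used throughout this section.
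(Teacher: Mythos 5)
Your proposal is correct and follows exactly the paper's route: membership from Lemma~\ref{lem:pw:nlk1} and hardness via the reduction of Section~\ref{sec:tw:is_max2sat} from \IS parameterized by path-width, noting that the primal graph of the produced 2-CNF is identical to $G$ so the given path-decomposition carries over with no width increase. Your added remarks on logspace computability and the exact preservation of path-width merely make explicit what the paper leaves implicit.
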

\begin{proof}
We prove the theorem by a reduction from \IS parameterized by path-width.
The proof is completely the same as that for the tree-width case (Section~\ref{sec:tw:is_max2sat})
\end{proof}

\subsubsection*{Acknowledgement}
Yoichi Iwata is supported by JSPS Grant-in-Aid for JSPS Fellows (256487).
Yuichi Yoshida is supported by JSPS Grant-in-Aid for Young Scientists (B)
(No.~26730009), MEXT Grant-in-Aid for Scientific Research on Innovative Areas
(24106001), and JST, ERATO, Kawarabayashi Large Graph Project.

\bibliographystyle{abbrv}
\bibliography{paper}

\end{document}